\providecommand{\U}[1]{\protect\rule{.1in}{.1in}}
\newtheorem{theorem}{Theorem}
\newtheorem{lemma}[theorem]{Lemma}
\newtheorem{proposition}[theorem]{Proposition}
\newenvironment{proof}[1][Proof]{\noindent\textbf{#1.} }{\ \rule{0.5em}{0.5em}}
\begin{document}

\title{Uniform bound of the entanglement for the ground state of the quantum Ising
model with large transverse magnetic field}
\author{M. Campanino$^{\#}$, M. Gianfelice$%
%TCIMACRO{\U{b0}}%
%BeginExpansion
{{}^\circ}%
%EndExpansion
$\\$^{\#}$Dipartimento di Matematica\\Universit\`{a} degli Studi di Bologna\\P.zza di Porta San Donato, 5 \ I-40127\\massimo.campanino@unibo.it\\$^{%
%TCIMACRO{\U{b0}}%
%BeginExpansion
{{}^\circ}%
%EndExpansion
}$Dipartimento di Matematica e Informatica\\Universit\`{a} della Calabria\\Campus di Arcavacata\\Ponte P. Bucci - cubo 30B\\I-87036 Arcavacata di Rende\\gianfelice@mat.unical.it}
\maketitle

\begin{abstract}
We consider the ground state of the quantum Ising model with transverse field
$h$ in one dimension in a finite volume
\[
\Lambda{_{m}:=\{-m,-m+1,\ldots,m+L\}\ .}%
\]
For $h$ sufficiently large we prove a bound for the entanglement of the
interval $\Lambda_{0}:=\left\{  0,..,L\right\}  $ relative to its complement
$\Lambda_{m}\backslash\Lambda_{0}$ which is uniform in $m$ and $L$. The bound
is established by means of a suitable cluster expansion.

\end{abstract}
\tableofcontents

\bigskip

\begin{description}
\item[AMS\ subject classification:] {\small 60K35, 82B10, 82B31. }

\item[Keywords and phrases:] {\small Quantum Ising model, Entanglement,
Spin-flip processes, Gibbs random fields, Cluster expansion. }

\item[Acknowledgement:] {\small M. Gianfelice and M. Campanino are partially
supported by G.N.A.M.P.A. and by MIUR-PRIN 20155PAWZB-004 }\emph{Large Scale
Random Structures.}
\end{description}

\bigskip

\section{Introduction and results}

A characteristic feature distinguishing quantum systems from the classical
ones is that pure states of composite systems do not assign in general a
definite pure state to their subsystems. In particular, in the framework of
quantum statistical mechanics, the density matrix describing the state of a
subsystem, which goes under the name of \textit{reduced density matrix}, is
obtained taking the trace over the other system's components and in general
corresponds to a mixed state. This property is called entanglement and the von
Neumann entropy of the reduced density matrix of a subsystem can be considered
as a measure of the entanglement of the subsystem itself. One can therefore
address the problem of estimating the the von Neumann entropy (also known in
the literature as \textit{entanglement entropy}) of a subsystem w.r.t. the
ground state of the entire system. There are few rigorous results in this
direction, we refer the reader to \cite{GOS} and reference therein for a more
complete discussion on this topic.

The ground state of the quantum Ising model with a transverse magnetic field
can be represented as a classical Ising model with one added continuous
dimension \cite{DLP}. In turn this classical Ising model can be represented
via a suitable FK random cluster model \cite{FK}, \cite{CKP}, \cite{AKN}. This
last representation has been used for example in \cite{GOS} to study the
entanglement of the ground state in the supercritical regime.

In this paper we study the problem of entanglement for the supercritical
quantum Ising model with transverse magnetic field by using the representation
of \cite{DLP}. We consider a Gibbs random field in $\mathbb{Z}^{2}$ in which
the spins take values in the space of trajectories of a spin-flip process. For
this model a cluster expansion was developed in \cite{CG} and it was proved
that it satisfies the conditions for convergence (see \cite{KP}) when the
parameter $h$ corresponding to the strength of the transverse magnetic field
is sufficiently large.

We consider the ground state of the quantum Ising model with transverse field
$h$ in one dimension in a finite volume
\begin{equation}
\Lambda{_{m}:=\{-m,-m+1,\ldots,m+L\}\ .}%
\end{equation}
By using this cluster expansion we prove that for $h$ sufficiently large the
entanglement of the interval $\Lambda_{0}:=\left\{  0,..,L\right\}  $ relative
to its complement $\Lambda_{m}\backslash\Lambda_{0}$ is bounded by a constant
uniformly in $m$ and $L.$

It was proved in \cite{GOS} that for $h$ larger than some value corresponding
to a percolation threshold the entanglement is bounded by a constant times
$\log L.$

In section \ref{System} we recall the definition of the quantum Ising model
with transverse field on $\mathbb{Z}$.

In section \ref{Cluster} we recall and present in a more complete form the
result about the cluster expansion for the one-dimensional interacting
spin-flip process given in \cite{CG}. We remark that, although in this paper
all the computations are carried out for the one-dimensional model with
nearest-neighbour translation-invariant ferromagnetic couplings, the cluster
expansion presented in section \ref{Cluster} can be performed for the model
defined on $\mathbb{Z}^{d},d\geq1,$ with bounded, finite-range, pairwise interactions.

In section \ref{Entanglement} we recall the set up developed in \cite{GOS} in
order to estimate the entanglement entropy of the ground state of the system
and prove the key estimates which will lead us to the uniform bound of this quantity.

If $\mathcal{H}_{m}:=\mathcal{H}_{\Lambda_{m}}$ is the Hilbert space for the
quantum system defined on $\Lambda_{m},$ considering the representation of
$\mathcal{H}_{m}$ as $\mathcal{H}_{m,L}\otimes\mathcal{H}_{L},$ with
$\mathcal{H}_{L}:=\mathcal{H}_{\Lambda_{0}}$ and $\mathcal{H}_{m,L}%
:=\mathcal{H}_{\Lambda_{m}\backslash\Lambda_{0}},$ let $\rho_{m}^{L}$ be the
trace over $\mathcal{H}_{m,L}$ of the density operator associated to the
ground state of the system.

We will prove the following

\begin{theorem}
\label{t1}Consider a one-dimensional quantum Ising model in a transverse
magnetic field. There exists a positive value of the external magnetic field
$h^{\ast}$ such that, for any $h>h^{\ast},$ the \emph{entanglement entropy} of
the ground state $S\left(  \rho_{m}^{L}\right)  :=-tr_{\mathcal{H}_{L}}\left(
\rho_{m}^{L}\log\rho_{m}^{L}\right)  $ is bounded by a constant uniformly in
$m,L$ with $m\geq0,L\geq1.$
\end{theorem}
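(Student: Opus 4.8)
The plan is to translate the entanglement entropy into a quantity controlled by the cluster expansion of Section~\ref{Cluster}. First I would use the representation of \cite{DLP} recalled in Section~\ref{System} to write the ground-state density operator $|\Omega\rangle\langle\Omega|$ as an imaginary-time path integral of the two-dimensional Gibbs random field (one spatial dimension $\mathbb{Z}$ and one continuous imaginary-time dimension). In this picture the matrix elements $(\rho_m^L)_{\sigma,\sigma'}$, with $\sigma,\sigma'$ configurations on $\Lambda_0$, are ratios of partition functions of the field on the full imaginary-time axis in which the complement $\Lambda_m\setminus\Lambda_0$ is glued across the slice $t=0$, while $\Lambda_0$ carries the open boundary data $\sigma$ at $0^-$ and $\sigma'$ at $0^+$; the normalization is the full partition function $Z_m$, so that $\mathrm{tr}\,\rho_m^L=1$. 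The crucial geometric feature is that the resulting cut $\Lambda_0\times\{0\}$ has exactly two endpoints, located near the sites $0$ and $L$, independently of $m$ and $L$. This is the set-up of \cite{GOS} to be recalled in Section~\ref{Entanglement}.

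Next I would reduce $S(\rho_m^L)$ to objects accessible to the expansion via the replica identity $S(\rho_m^L)=-\left.\frac{d}{dn}\log\mathrm{tr}\big((\rho_m^L)^n\big)\right|_{n=1}$, noting that $\mathrm{tr}\big((\rho_m^L)^n\big)=Z_n/Z_m^{\,n}$, where $Z_n$ is the partition function of the field on the $n$-sheeted surface obtained by cyclically gluing $n$ copies of the space--time slab along the cut. Writing $\log\mathrm{tr}\big((\rho_m^L)^n\big)=\log Z_n-n\log Z_m$ and applying the cluster expansion, which converges for $h>h^\ast$ by the Koteck\'{y}--Preiss criterion \cite{KP}, \cite{CG}, both terms become convergent sums of cluster weights. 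Away from a bounded neighbourhood of the two endpoints of the cut the $n$-sheeted geometry looks locally like $n$ decoupled copies of the single slab, so the clusters supported there contribute to $\log Z_n$ exactly $n$ times their single-sheet contribution and cancel against $n\log Z_m$. What survives is a sum over clusters that intersect a fixed neighbourhood of the sites $0$ and $L$.

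Finally, since the cluster weights decay exponentially in the diameter of their support (again by the convergence condition), and since there are only the two endpoints regardless of how large $L$ and $m$ are, this surviving sum is bounded by a constant uniformly in $m$ and $L$; the extensive, $L$- and $m$-dependent bulk has been removed by the cancellation. These are precisely the key estimates to be established in Section~\ref{Entanglement}. The main obstacle I expect lies in the replica step: the cluster expansion produces bounds for integer sheet number $n$, whereas the von Neumann entropy requires differentiation at $n=1$, hence an analytic continuation in $n$. I would therefore need the expansion for $\log Z_n$ to be analytic and uniformly convergent in a complex neighbourhood of $n=1$, with the endpoint weights admitting a joint bound that is uniform in $n$ as well as in $m,L$, so that the derivative at $n=1$ can be taken term by term and stays finite. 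An alternative that sidesteps the continuation is to bound $S(\rho_m^L)$ by the relative entropy to a reference product state, $S(\rho_m^L)=-\mathrm{tr}(\rho_m^L\log\sigma)-D(\rho_m^L\|\sigma)$ with $\sigma$ the infinite-$h$ product ground state, and to estimate $-\mathrm{tr}(\rho_m^L\log\sigma)$ directly from the expansion; in either route the decisive input is the exponential decay of cluster weights together with the fact that the cut contributes only two localized defects.
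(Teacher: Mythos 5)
Your set-up is the right one --- the representation of \cite{DLP}, the slit geometry of \cite{GOS} in which $\Lambda_0\times\{0\}$ is doubled into the two layers $\Lambda_+,\Lambda_-$ carrying the row and column indices of $\rho_m^L$, and the convergent Koteck\'{y}--Preiss expansion with exponentially decaying cluster weights are exactly the ingredients used here. But the route you take from the expansion to the entropy has a genuine gap, and the ``main obstacle'' you flag understates it. The $n$-sheeted partition function $Z_n$ is defined only for integer $n\geq 1$; there is no object that could be ``analytic in a complex neighbourhood of $n=1$'' until one is constructed, and the cluster expansion, being a sum over geometric objects living on the $n$-sheeted surface, does not interpolate in $n$. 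Worse, even if you bounded $\log\mathrm{tr}\left(\left(\rho_m^L\right)^n\right)$ uniformly for every integer $n\geq 2$, this controls the R\'{e}nyi entropies $S_n$ with $n\geq 2$, which are non-increasing in $n$ and hence satisfy $S_n\leq S_1$: they lie \emph{below} the von Neumann entropy and cannot furnish the upper bound the theorem asserts. Your fallback $S\left(\rho_m^L\right)\leq -\mathrm{tr}\left(\rho_m^L\log\sigma\right)$ also fails to be uniform in $L$: for $\log\sigma$ to exist $\sigma$ must be a mixed product state, and then $-\mathrm{tr}\left(\rho_m^L\log\sigma\right)=\sum_{x\in\Lambda_0}\left(S\left(\rho_x\right)+D\left(\rho_x\Vert\sigma_x\right)\right)\geq\sum_{x\in\Lambda_0}S\left(\rho_x\right)$, and at finite $h$ every single-site marginal in the bulk is mixed with entropy bounded below by a positive constant, so this bound grows linearly in $L$.

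The paper's mechanism is different and avoids any continuation in $n$. The three Lemmas show, via ratios of polymer partition functions and the bound (\ref{ext1}), that $\phi_{m,\beta}\left\{\sigma_{L^+}=\epsilon^+,\sigma_{L^-}=\epsilon^-\right\}$ approximately factorizes into its marginals uniformly in $m,L,\beta$ and the boundary condition (\ref{L1}), and that the matrix elements change by at most $O\left(e^{-cm}\right)$ when the volume is enlarged ((\ref{L2}), (\ref{L3})); these combine into the Proposition, $\left\Vert\tilde{\rho}_m^L-\tilde{\rho}_n^L\right\Vert\leq e^{\frac{\psi\left(c\right)}{2}\left(e^{-cm}+6\right)}\frac{\psi\left(c\right)}{2}e^{-cm}$, uniformly in $L$. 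Since $\rho_K^L$ is obtained by tracing out only $2K$ sites, its rank is at most $2^{2K}$ and its entropy at most $2K\log 2$; writing $\rho_m^L$ as $\rho_K^L$ plus a telescoping sum of increments of norm $C^{\prime}e^{-C\left(k-K\right)}$ and applying eigenvalue perturbation bounds as in \cite{GOS} yields $\alpha_i\left(\rho_m^L\right)\leq C_0 i^{-\iota}$ with $\iota>2$ for $i>2^{K}$, whence the uniform bound on $S\left(\rho_m^L\right)$. Your intuition that only the two endpoints of the cut matter is correct, but it enters through this Cauchy-in-$m$ estimate together with the low rank of the small-volume truncation --- not through a replica computation --- and that is the idea your proposal is missing.
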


We stress that the cluster expansion can be carried out for one-dimensional
quantum Ising models with transverse field with bounded, finite-range,
translation-invariant, ferromagnetic interactions. Therefore our result can be
generalised in a straightforward way to this case.

\subsection{Notation}

Given a set $A\subset\mathbb{R}^{d},d\geq1,$ let us denote by $A^{c}$ its
complement. We also set $\mathcal{P}\left(  A\right)  $ to be the collection
of all subsets of $A,$ $\mathcal{P}_{n}\left(  A\right)  :=\{B\in
\mathcal{P}\left(  A\right)  :\left\vert B\right\vert =n\}$ and $\mathcal{P}%
_{f}\left(  A\right)  :=\bigcup_{n\geq1}\mathcal{P}_{n}\left(  A\right)  ,$
where $\left\vert B\right\vert $ is the cardinality of $B.$ Given $B\subset
A,\mathbf{1}_{B}$ denotes the indicator function of $B.$ Hence, if $B$ is a
discrete set, for any $b\in B,\mathbf{1}_{\left\{  b\right\}  }\left(
b^{\prime}\right)  =\left\{
\begin{array}
[c]{ll}%
1 & b=b^{\prime}\\
0 & b\neq b^{\prime}%
\end{array}
\right.  .$

A sequence of $\left\{  \Lambda_{n}\right\}  _{n\in\mathbb{N}}\in
\mathcal{P}_{f}\left(  \mathbb{Z}^{d}\right)  ,d\geq1,$ is called
\emph{cofinal} if for any $n\geq1,\Lambda_{n}\subset\Lambda_{n+1}$ and
$\left\{  \Lambda_{n}\right\}  \uparrow\mathbb{Z}^{d}.$

Let $\mathcal{B}\left(  \mathbb{R}^{d}\right)  $ be the Borel $\sigma$-algebra
of $\mathbb{R}^{d}$ and $\lambda^{d}$ be the Lebesgue measure on $\left(
\mathbb{R}^{d},\mathcal{B}\left(  \mathbb{R}^{d}\right)  \right)  .$ A
sequence of $\left\{  I_{n}\right\}  _{n\in\mathbb{N}}\in\mathcal{B}\left(
\mathbb{R}^{d}\right)  $ is called an \emph{exhaustion} of $\mathbb{R}^{d}$
if, for any $n\geq1,\lambda^{d}\left(  I_{n}\right)  <\infty,I_{n}\subset
I_{n+1}$ and $\left\{  I_{n}\right\}  \uparrow\mathbb{R}^{d}.$

For any $x\in\mathbb{R}^{d},$ we set $\left\vert x\right\vert :=\sum_{i=1}%
^{d}\left\vert x_{i}\right\vert $ and consequently, for any $A\subset
\mathbb{R}^{d},\mathrm{dist}\left(  x,A\right)  :=\inf_{y\in A}\left\vert
x-y\right\vert .$

Given a Hilbert space $\mathcal{H},$ let $\mathfrak{B}\left(  \mathcal{H}%
\right)  $ be the Banach space of bounded linear operators on $\mathcal{H}$
with norm $\left\Vert \mathbf{A}\right\Vert :=\sup_{\psi\in\mathcal{H}%
\ :\ \left\Vert \psi\right\Vert =1}\left\Vert \mathbf{A}\psi\right\Vert $ and
$\mathfrak{T}\left(  \mathcal{H}\right)  \subset\mathfrak{B}\left(
\mathcal{H}\right)  $ the collection of trace class operators on $\mathcal{H}$
which is a Banach space if endowed with the norm $tr_{\mathcal{H}}\left(
\left\vert \mathbf{A}\right\vert \right)  ,$ where $tr_{\mathcal{H}}\left(
\mathbf{A}\right)  $ denotes the trace of $\mathbf{A}\in\mathfrak{T}\left(
\mathcal{H}\right)  .$ Then, if $\mathcal{H}_{1},\mathcal{H}_{2}$ are Hilbert
spaces, we denote by $tr_{\mathcal{H}_{1}}:\mathfrak{T}\left(  \mathcal{H}%
_{1}\otimes\mathcal{H}_{2}\right)  \longmapsto\mathfrak{T}\left(
\mathcal{H}_{2}\right)  $ the partial trace w.r.t. $\mathcal{H}_{1}.$

\subsubsection{Graphs}

Let $G=\left(  V,E\right)  $ be a graph whose set of vertices and set of edges
are given respectively by a finite or enumerable set $V$ and $E\subset
\mathcal{P}_{2}\left(  V\right)  .\ G^{\prime}=\left(  V^{\prime},E^{\prime
}\right)  $ such that $V^{\prime}\subseteq V$ and $E^{\prime}\subseteq
\mathcal{P}_{2}\left(  V^{\prime}\right)  \cap E$ is said to be a subgraph of
$G$ and this property is denoted by $G^{\prime}\subseteq G.$ If $G^{\prime
}\subseteq G,$ we denote by $V\left(  G^{\prime}\right)  $ and $E\left(
G^{\prime}\right)  $ respectively the set of vertices and the collection of
the edges of $G^{\prime}.\ \left\vert V\left(  G^{\prime}\right)  \right\vert
$ is called the \emph{order} of $G^{\prime}$ while $\left\vert E\left(
G^{\prime}\right)  \right\vert $ is called its \emph{size}. Given $G_{1}%
,G_{2}\subseteq G,$ we denote by $G_{1}\cup G_{2}:=\left(  V\left(
G_{1}\right)  \cup V\left(  G_{2}\right)  ,E\left(  G_{1}\right)  \cup
E\left(  G_{2}\right)  \right)  \subset G$ the \emph{graph union }of $G_{1}$
and $G_{2}.$ Given $e\in E,$ we denote by $V_{e}:=\left\{  v\in V:\mathbf{1}%
_{e}\left(  v\right)  =1\right\}  $ the set of \emph{endpoints} of $e,$ hence
$e=\left\{  v,v^{\prime}\right\}  $ iff $\left\{  v,v^{\prime}\right\}
=V_{e}.$ Moreover, given $E^{\prime}\subseteq E,$ we denote by $V\left(
E^{\prime}\right)  :=\bigcup_{e\in E^{\prime}}V_{e}.$

A \emph{path} in $G$ is a subgraph $\gamma$ of $G$ such that there is a
bijection $\left\{  0,..,\left\vert E\left(  \gamma\right)  \right\vert
\right\}  \ni i\longmapsto v\left(  i\right)  :=x_{i}\in V\left(
\gamma\right)  $ with the property that any $e\in E\left(  \gamma\right)  $
can be represented as $\left\{  x_{i-1},x_{i}\right\}  $ for
$i=1,..,\left\vert E\left(  \gamma\right)  \right\vert .$ Two distinct
vertices $x,y$ of $G$ are said to be \emph{connected} if there exists a path
$\gamma\subseteq G$ such that $x_{0}=x,\ x_{\left\vert E\left(  \gamma\right)
\right\vert }=y.$ Therefore, if $\gamma$ is a path in $G,$ we will denote by
$\left\vert \gamma\right\vert $ its length $\left\vert E\left(  \gamma\right)
\right\vert $ and by $end\left(  \gamma\right)  :=\left\{  v\in V\left(
\gamma\right)  :\sum_{e\in E\left(  \gamma\right)  }\mathbf{1}_{e}\left(
v\right)  =1\right\}  $ the collection of its \emph{endpoints}. Hence, for any
$e\in E,$ the graph $\left(  V_{e},e\right)  \subset G$ is a path of length
$1$ and $end\left(  \left(  V_{e},e\right)  \right)  =V_{e}.$ A graph $G$ is
said to be \emph{connected} if any two distinct elements of $V\left(
G\right)  $ are connected. The maximal connected subgraphs of $G$ are called
\emph{components} of $G.$ Two connected subgraph $G_{1},G_{2}\subset G$ are
connected by a path $\gamma$ in $G$ if $G_{1}\cup G_{2}$ is not connected and
$G_{1}\cup G_{2}\cup\gamma$ is a connected subgraph of $G.$

We denote by $\mathbb{L}^{d}$ the graph $\left(  \mathbb{Z}^{d},\mathbb{E}%
^{d}\right)  $ with $\mathbb{E}^{d}:=\left\{  \left\{  x,y\right\}
\in\mathcal{P}_{2}\left(  \mathbb{Z}^{d}\right)  :\left\vert x-y\right\vert
=1\right\}  .$ If $\Lambda\subset\mathbb{Z}^{d},$ we also set $\partial
\Lambda:=\left\{  y\in\Lambda^{c}:\mathrm{dist}\left(  y,\Lambda\right)
=1\right\}  $ and $\mathbb{L}_{\Lambda}^{d}:=\left(  \Lambda\cup
\partial\Lambda,\mathbb{E}_{\Lambda}^{d}\right)  ,$ where $\mathbb{E}%
_{\Lambda}^{d}:=\left\{  e\in\mathbb{E}^{d}:V_{e}\subset\left(  \Lambda
\cup\partial\Lambda\right)  \right\}  .$

\subsection{The model}

We consider the Hilbert space $\mathcal{H}:=l^{2}\left(  \left\{
-1,1\right\}  ,\mathbb{C}\right)  $ which is isomorphic to $\mathbb{C}^{2}.$
The algebra $\mathcal{U}:=M\left(  2,{\mathbb{C}}\right)  $ of bounded linear
operators on $\mathcal{H}$ is then generated by the Pauli matrices
$\sigma^{\left(  i\right)  },i=1,2,3$ and by the identity $I.$ In particular,
unless differently specified, in the following we will always consider the
representation of $\mathcal{U}$ with respect to which $\sigma^{\left(
3\right)  }$ is diagonal, i.e.
\begin{equation}
\sigma^{\left(  3\right)  }=\left(
\begin{array}
[c]{cc}%
1 & 0\\
0 & -1
\end{array}
\right)
\end{equation}
and
\begin{equation}
\sigma^{\left(  1\right)  }=\left(
\begin{array}
[c]{cc}%
0 & 1\\
1 & 0
\end{array}
\right)  \ .
\end{equation}

Let $\Lambda$ be a finite connected subset of $\mathbb{Z}$ and set
$\mathcal{H}_{\Lambda}:=\bigotimes_{x\in\Lambda}\mathcal{H}_{x}$ where, for
any $x\in\Lambda,\mathcal{H}_{x}$ is a copy of $\mathcal{H}$ at $x.$ The
finite volume Hamiltonian of the ferromagnetic quantum Ising model with
transverse field is the linear operator on $\mathcal{H}_{\Lambda}$%
\begin{equation}
\mathbf{H}_{\Lambda}\left(  J,h\right)  :=-\frac{1}{2}J\sum_{x,y\in
\Lambda\ :\ \left\{  x,y\right\}  \in\mathbb{E}}\sigma_{x}^{\left(  3\right)
}\sigma_{y}^{\left(  3\right)  }-h\sum_{x\in\Lambda}\sigma_{x}^{\left(
1\right)  }\ , \label{HtfI}%
\end{equation}
with $h>0$ and $J\geq0$ for any $x,y\in\Lambda.$

Given $\Lambda\subset\subset\mathbb{Z},$ it can be proven \cite{CKP} that
$\mathbf{H}_{\Lambda}\left(  J,h\right)  $ generates a positivity improving
semigroup which by the Perron-Frobenius theorem has a unique ground state
$\Psi_{\Lambda}\in\mathcal{H}_{\Lambda}.$ The same argument applies to the
operator $\mathbf{L}_{\Lambda}\left(  h\right)  :\mathcal{H}_{\Lambda
}\circlearrowleft$ such that
\begin{equation}
\mathbf{L}_{\Lambda}\left(  h\right)  :=h\sum_{x\in\Lambda}\left(  \sigma
_{x}^{\left(  1\right)  }-1\right)  \ ,
\end{equation}
whose ground state $\Psi_{\Lambda}^{0}\in\mathcal{H}_{\Lambda}$ is such that
$\left\langle \Psi_{\Lambda},\Psi_{\Lambda}^{0}\right\rangle >0$ and, for any
element $\mathbf{A}$ of the Abelian subalgebra $\mathfrak{A}_{\Lambda},$
generated by $\left\{  \sigma_{x}^{\left(  3\right)  },x\in\Lambda\right\}  ,$
of the algebra of linear operators on $\mathcal{H}_{\Lambda}$%
\begin{equation}
\left\langle \Psi_{\Lambda},\mathbf{A}\Psi_{\Lambda}\right\rangle =\lim
_{\beta\rightarrow\infty}\frac{\left\langle \Psi_{\Lambda}^{0},e^{-\frac
{\beta}{2}\mathbf{H}_{\Lambda}\left(  J,h\right)  }\mathbf{A}e^{-\frac{\beta
}{2}\mathbf{H}_{\Lambda}\left(  J,h\right)  }\Psi_{\Lambda}^{0}\right\rangle
}{\left\langle \Psi_{\Lambda}^{0},e^{-\beta\mathbf{H}_{\Lambda}\left(
J,h\right)  }\Psi_{\Lambda}^{0}\right\rangle }=\lim_{\beta\rightarrow\infty
}\frac{tr_{\mathcal{H}_{\Lambda}}\left(  e^{-\beta\mathbf{H}_{\Lambda}\left(
J,h\right)  }\mathbf{A}\right)  }{tr_{\mathcal{H}_{\Lambda}}\left(
e^{-\beta\mathbf{H}_{\Lambda}\left(  J,h\right)  }\right)  }\ .\label{gs}%
\end{equation}

\subsubsection{Spin-flip process description of the system}

In \cite{DLP} (Section 2.5) it has been shown that in the chosen
representation for the Pauli matrices, for any $h>0,$ the linear operator
$\mathbf{L}\left(  h\right)  :=h\left(  \sigma^{\left(  1\right)  }-1\right)
$ on $\mathcal{H}$ can be interpreted as the generator of a continuous time
Markov process with state space $\left\{  -1,1\right\}  ,$ the so called
\emph{spin-flip process}, with rate $h.$ Namely, for any function $f$ on
$\left\{  -1,1\right\}  $%
\begin{equation}
L\left(  h\right)  f\left(  \xi\right)  :=h\left(  f\left(  -\xi\right)
-f\left(  \xi\right)  \right)  \;,\;\xi\in\left\{  -1,1\right\}  \ .
\end{equation}
Hence, given a Poisson point process $\left(  N_{h}\left(  t\right)
,t\in\mathbb{R}\right)  $ with intensity $h,$ we can consider the random
process which, with a little abuse of notation, we denote by
\begin{equation}
\mathbb{R}\ni t\longmapsto\sigma\left(  t\right)  :=\left(  -1\right)
^{N_{h}\left(  t\right)  }\in\left\{  -1,1\right\}  \ ,
\end{equation}
that is the stationary measure $\mu$ defined by the semigroup generated by
$L\left(  h\right)  $ on the measurable space $\left(  \mathcal{D}%
,\mathcal{F}\right)  ,$ where $\mathcal{D}$ is the Skorokhod space
$\mathbb{D}\left(  \mathbb{R},\left\{  -1,1\right\}  \right)  $ of piecewise
$\left\{  -1,1\right\}  $-valued rcll (c\`{a}dl\`{a}g) constant functions on
$\mathbb{R}$ and $\mathcal{F}$ is the $\sigma$-algebra generated by the open
sets in the associated Skorokhod topology.

Consequently, for any interval $I\subset\mathbb{R},$ let $\mu_{I}$ be the
restriction of $\mu$ to the measurable space $\left(  \mathcal{D}%
_{I},\mathcal{F}_{I}\right)  ,$ where
\begin{equation}
\mathcal{D}_{I}:=\left\{  \sigma\in\mathbb{D}\left(  I,\left\{  -1,1\right\}
\right)  :\sigma=\sigma^{\prime}\upharpoonleft_{I},\ \sigma^{\prime}%
\in\mathcal{D}\right\}
\end{equation}
and $\mathcal{F}_{I}$ is the $\sigma$-algebra generated by the open sets in
the associated Skorokhod topology. Moreover, we denote by $\mu_{I}^{p}$ the
probability distribution corresponding to periodic b.c.'s, that is conditional
to
\begin{align}
\mathcal{D}_{I}^{p}  &  :=\left\{  \sigma\in\mathcal{D}_{I}:\sigma\left(
-\frac{\beta}{2}\right)  =\sigma\left(  \frac{\beta}{2}\right)  \right\} \\
&  =\left\{  \sigma\in\mathcal{D}_{I}:N_{h}\left(  \frac{\beta}{2}\right)
-N_{h}\left(  -\frac{\beta}{2}\right)  =2k\ ,\ k\in\mathbb{Z}^{+}\right\}
\ .\nonumber
\end{align}

Let $\mathfrak{D}_{I}:=\mathcal{D}_{I}^{\mathbb{Z}}$ the configuration space
of the random field $\mathbb{Z}\ni x\longmapsto\sigma_{x}\in\mathcal{D}%
_{I},\mathfrak{F}_{I}:=\mathcal{F}_{I}^{\otimes\mathbb{Z}}$ the $\sigma
$-algebra generated by the cylinder events of $\mathfrak{D}_{I},$ and $\nu
_{I}$ the product measure $\mu_{I}^{\otimes\mathbb{Z}}.$

\paragraph{The finite volume distribution\label{System}}

Given $\beta>0,$ let us set $I:=\left[  -\frac{\beta}{2},\frac{\beta}%
{2}\right]  .$ For any finite subset $\Lambda$ of $\mathbb{Z}$ we denote by
$\sigma_{\Lambda}$ the restriction of the configuration $\sigma\in
\mathfrak{D}_{I}$ to $\mathcal{D}_{I}^{\Lambda}$ and set $\sigma_{\Lambda
}\left(  t\right)  :=\left\{  \sigma_{x}\left(  t\right)  \right\}
_{x\in\Lambda},\mathcal{F}_{I}^{\Lambda}:=\mathcal{F}_{I}^{\otimes\Lambda}$
and $\mu_{I}^{\Lambda}:=\mu_{I}^{\otimes\Lambda}.$ We introduce the
conditional Gibbs measure $\nu_{I,\Lambda}^{\eta;\xi^{+},\xi^{-}}$ on $\left(
\mathcal{D}_{I}^{\Lambda},\mathcal{F}_{I}^{\Lambda}\right)  $ with density
w.r.t. $\mu_{I}^{\Lambda}$ given by
\begin{align}
&  Z_{\Lambda,I}^{-1}\left(  \eta;\xi^{+},\xi^{-}\right)  \exp\left[
J\sum_{x,y\in\Lambda\ :\ \left\{  x,y\right\}  \in\mathbb{E}}\int
_{-\frac{\beta}{2}}^{\frac{\beta}{2}}\text{d}t\sigma_{x}\left(  t\right)
\sigma_{y}\left(  t\right)  +J\sum_{x\in\Lambda}\sum_{y\in\partial\Lambda}%
\int_{-\frac{\beta}{2}}^{\frac{\beta}{2}}\text{d}t\sigma_{x}\left(  t\right)
\eta_{y}\left(  t\right)  \right]  \times\label{Gibbs1}\\
&  \times\prod\limits_{x\in\Lambda}\mathbf{1}_{\left\{  \xi_{x}^{-}\right\}
}\left(  \sigma_{x}\left(  -\frac{\beta}{2}\right)  \right)  \mathbf{1}%
_{\left\{  \xi_{x}^{+}\right\}  }\left(  \sigma_{x}\left(  \frac{\beta}%
{2}\right)  \right)  \ ,\nonumber
\end{align}
where $\eta\in\mathcal{D}_{I}^{\Lambda^{c}},\xi^{+},\xi^{-}\in\Omega_{\Lambda
}:=\left\{  -1,1\right\}  ^{\Lambda}$ and $Z_{\Lambda,I}\left(  \eta;\xi
^{+},\xi^{-}\right)  $ is the normalizing constant.

In \cite{DLP} it has been shown that the expected value of an observable
$\mathbf{F}\in\mathfrak{A}_{\Lambda}$ in the equilibrium (KMS) state of the
ferromagnetic quantum Ising model with transverse field at inverse temperature
$\beta>0$ can be represented as the expected value w.r.t. the Gibbs
distribution (\ref{Gibbs1}) with periodic b.c.'s at $t=\pm\frac{\beta}{2}$ of
the function $F$ on $\Omega_{\Lambda}$ corresponding to the spectral
representation of $\mathbf{F}$ computed at $\sigma_{\Lambda}\left(  0\right)
\in\Omega_{\Lambda}.$ Namely
\begin{equation}
\frac{tr_{\mathcal{H}_{\Lambda}}\left(  e^{-\beta\mathbf{H}_{\Lambda}\left(
h\right)  }\mathbf{F}\right)  }{tr_{\mathcal{H}_{\Lambda}}\left(
e^{-\beta\mathbf{H}_{\Lambda}\left(  h\right)  }\right)  }=\frac
{tr_{\mathcal{H}_{\Lambda}}\left(  e^{-\frac{\beta}{2}\mathbf{H}_{\Lambda
}\left(  h\right)  }\mathbf{F}e^{-\frac{\beta}{2}\mathbf{H}_{\Lambda}\left(
h\right)  }\right)  }{tr_{\mathcal{H}_{\Lambda}}\left(  e^{-\beta
\mathbf{H}_{\Lambda}\left(  h\right)  }\right)  }=\nu_{I,\Lambda}^{\eta
}\left(  F\left(  \sigma_{\Lambda}\left(  0\right)  \right)  \right)  \ ,
\end{equation}
where
\begin{equation}
\frac{\text{d}\nu_{I,\Lambda}^{\eta}}{\text{d}\mu_{I,\Lambda}^{p}}%
:=Z_{\Lambda,I}^{-1}\left(  \eta\right)  \exp\left[  J\sum_{x,y\in
\Lambda\ :\ \left\{  x,y\right\}  \in\mathbb{E}}\int_{-\frac{\beta}{2}}%
^{\frac{\beta}{2}}\text{d}t\sigma_{x}\left(  t\right)  \sigma_{y}\left(
t\right)  +J\sum_{x\in\Lambda}\sum_{y\in\partial\Lambda}\int_{-\frac{\beta}%
{2}}^{\frac{\beta}{2}}\text{d}t\sigma_{x}\left(  t\right)  \eta_{y}\left(
t\right)  \right]
\end{equation}
is the density of the conditional Gibbs measure w.r.t. $\mu_{I,\Lambda}%
^{p}:=\left(  \mu_{I}^{p}\right)  ^{\otimes\Lambda}.$ Clearly the b.c.
$\eta\in\mathcal{D}_{I}^{\Lambda^{c}}$ can be thought of as a time varying
local external field in the direction of the spin field (see also \cite{CKP}
and \cite{KL} for a general discussion).

Correlation inequalities imply that the expected value of local observables of
the form $\prod\limits_{x\in\Lambda}\sigma_{x}^{\left(  z\right)  }%
,\Lambda\subset\subset\mathbb{Z},$ in the ground state of the ferromagnetic
quantum Ising model with transverse field can be computed from $\nu
_{I_{n},\Lambda_{m}}^{\eta}\left[  \prod\limits_{x\in\Lambda}\sigma_{x}\left(
0\right)  \right]  $ by taking first the limit through an exhaustion $\left\{
I_{n}\right\}  _{n\in\mathbb{N}}$ of $\mathbb{R}$ (i.e. $\beta\rightarrow
\infty$) and then the limit $\left\{  \Lambda_{m}\right\}  _{m\in\mathbb{N}%
}\uparrow\mathbb{Z}$ (see also \cite{CKP} Section 2). For sufficiently large
values of the external field $h,$ these limits can be shown to be independent
of the b.c.'s by using the cluster expansion carried out in the next section.

Therefore, in the following, we will consider fixed b.c.'s at $\left\{
\left(  x,t\right)  \in\Lambda\times I:t=\pm\frac{\beta}{2}\right\}  ,$ free
b.c.'s at $\left\{  \left(  x,t\right)  \in\Lambda^{c}\times I\right\}  $ and
assume that $\beta$ is a multiple of $\delta.$

\section{Cluster expansion\label{Cluster}}

We perform a cluster expansion on the model and verify that, when $h$ is
sufficiently large, we can ensure that, for a suitable choice of the
parameters, the condition of Koteck\'{y} and Preiss \cite{KP} are satisfied
and the cluster expansion is therefore convergent.

We stress that the following argument applies to a more general setup in which
the model is defined on $\mathbb{Z}^{d},$ with $d\geq1,$ and the coupling
between any pair of spins are bounded. We also remark that the requirement for
the two-body interactions to be ferromagnetic is needed in order to guarantee,
by means of correlation inequalities, the existence of the ground state, while
translation-invariance and finite-rangeness are sufficient conditions for the
existence of thermodynamics.

Given $\delta$ to be fixed later, we partition the trajectory of any spin-flip
process $\left(  \sigma_{x}\left(  t\right)  ,t\in I\right)  ,x\in\Lambda,$
into blocks of size $\delta$ (fig. 1). We will call the last coordinate of the
vector in $\mathbb{R}^{2}$ corresponding to a point in $\mathbb{Z}\times
\delta\mathbb{Z}$ the \emph{vertical component}. Then, denoting an element $x$
of $\mathbb{Z}\times\delta\mathbb{Z}$ by $x=\left(  x_{1},\delta x_{2}\right)
,$ we denote by $\mathbb{L}_{\delta}^{2}$ the graph whose set of vertices is
$\mathbb{Z}\times\delta\mathbb{Z}$ and whose set of edges is $\mathbb{E}%
_{\delta}^{2}:=\left\{  \left\{  x,y\right\}  \in\mathcal{P}_{2}\left(
\mathbb{Z}\times\delta\mathbb{Z}\right)  :\left\vert x_{1}-y_{1}\right\vert
+\left\vert x_{2}-y_{2}\right\vert =1\right\}  .$

Let us set $\mathbb{V}:=\left\{  \left\{  x,y\right\}  \in\mathbb{E}_{\delta
}^{2}:x_{1}=y_{1}\right\}  $ the set of vertical edges in $\mathbb{L}_{\delta
}^{2}$ and by $\mathbb{O}:=\mathbb{E}_{\delta}^{2}\backslash\mathbb{V}.$
Denoting by $\Delta:=\Lambda\times\left(  \delta\mathbb{Z}\cap I\right)  $ we
define $\mathbb{O}_{\Delta}:=\left\{  e\in\mathbb{O}:V_{e}\subset\left(
\Delta\backslash\overline{\partial}\Delta\right)  \right\}  $ and
$\mathbb{V}_{\Delta}:=\left\{  e\in\mathbb{V}:V_{e}\subset\Delta\right\}  .$
Moreover, we define
\begin{equation}
\partial^{\pm}\Delta:=\left\{  \left(  x_{1},\delta x_{2}\right)  \in
\Delta:x_{1}\in\Lambda,\delta x_{2}=\pm\frac{\beta}{2}\right\}  \label{b+-}%
\end{equation}
and set $\overline{\partial}\Delta:=\partial^{+}\Delta\cup\partial^{-}\Delta$
and%
\begin{equation}
\partial\Delta:=\overline{\partial}\Delta\cup\left\{  x\in\mathbb{Z}%
\times\delta\mathbb{Z}:x_{1}\in\partial\Lambda,x_{2}\in\delta\mathbb{Z}\cap
I\right\}  \ . \label{bDelta}%
\end{equation}

Then, denoting by $\Omega_{D}:=\left\{  -1,1\right\}  ^{D},$ for any
$D\subset\mathbb{Z}\times\delta\mathbb{Z},$ assuming b.c. $\xi=\left(  \xi
^{+},\xi^{-}\right)  \in\Omega_{\overline{\partial}\Delta}:=\Omega
_{\partial^{+}\Delta}\times\Omega_{\partial^{-}\Delta}$ at $\overline
{\partial}\Delta,$ with $\xi^{+},\xi^{-}$ appearing in (\ref{Gibbs1}), and
free b.c.'s at $\partial\Delta\backslash\overline{\partial}\Delta,$ we have
\begin{align}
Z_{\Delta}\left(  \xi\right)   &  :=Z_{\Lambda,I}\left(  \xi^{+},\xi
^{-}\right)  =\int%
%TCIMACRO{\dbigotimes \limits_{z_{1}\in\Lambda}}%
%BeginExpansion
{\displaystyle\bigotimes\limits_{z_{1}\in\Lambda}}
%EndExpansion
\mu_{I}\left(  \text{d}\sigma_{z_{1}}\right)  e^{\sum_{x,y\in\Delta
\ :\ \left\{  x,y\right\}  \in\mathbb{O}}W\left(  \sigma_{x},\sigma
_{y}\right)  }\times\\
&  \times\prod\limits_{z_{1}\in\Lambda}\mathbf{1}_{\left\{  \xi_{z_{1}}%
^{-}\right\}  }\left(  \sigma_{z_{1}}\left(  -\frac{\beta}{2}\right)  \right)
\mathbf{1}_{\left\{  \xi_{z_{1}}^{+}\right\}  }\left(  \sigma_{z_{1}}\left(
\frac{\beta}{2}\right)  \right)  \ ,\nonumber
\end{align}
where, for any $x,y\in\Delta\backslash\overline{\partial}\Delta,$%
\begin{equation}
W\left(  \sigma_{x},\sigma_{y}\right)  =J\int_{0}^{\delta}dt\sigma_{x_{1}%
}\left(  \delta x_{2}+t\right)  \sigma_{y_{1}}\left(  \delta y_{2}+t\right)
\ .
\end{equation}
Setting
\begin{equation}
e^{W\left(  \sigma_{x},\sigma_{y}\right)  }=1+\left[  e^{W\left(  \sigma
_{x},\sigma_{y}\right)  }-1\right]  \ ,
\end{equation}
$Z_{\Delta}\left(  \xi\right)  $ can be rewritten as
\begin{align}
Z_{\Delta}\left(  \xi\right)   &  =\sum_{\ell\in\mathcal{P}\left(
\mathbb{O}_{\Delta}\right)  }\int%
%TCIMACRO{\dbigotimes \limits_{z_{1}\in\Lambda}}%
%BeginExpansion
{\displaystyle\bigotimes\limits_{z_{1}\in\Lambda}}
%EndExpansion
\mu_{I}\left(  \text{d}\sigma_{z_{1}}\right)  \prod\limits_{e\in\ell}\left[
e^{\mathbf{1}_{e}\left(  \left\{  x,y\right\}  \right)  W\left(  \sigma
_{x},\sigma_{y}\right)  }-1\right]  \times\\
&  \times\mathbf{1}_{\left\{  \xi^{-}\right\}  }\left(  \sigma\left(
-\frac{\beta}{2}\right)  \right)  \mathbf{1}_{\left\{  \xi^{+}\right\}
}\left(  \sigma\left(  \frac{\beta}{2}\right)  \right) \nonumber\\
&  =\sum_{\ell\in\mathcal{P}\left(  \mathbb{O}_{\Delta}\right)  }\int%
%TCIMACRO{\dbigotimes \limits_{z_{1}\in\Lambda}}%
%BeginExpansion
{\displaystyle\bigotimes\limits_{z_{1}\in\Lambda}}
%EndExpansion
\mu_{I}\left(  \text{d}\sigma_{z_{1}}\right)  \prod\limits_{e\in\ell
}\mathbf{1}_{e}\left(  \left\{  x,y\right\}  \right)  \left(  e^{W\left(
\sigma_{x},\sigma_{y}\right)  }-1\right)  \times\nonumber\\
&  \times\mathbf{1}_{\left\{  \xi^{-}\right\}  }\left(  \sigma\left(
-\frac{\beta}{2}\right)  \right)  \mathbf{1}_{\left\{  \xi^{+}\right\}
}\left(  \sigma\left(  \frac{\beta}{2}\right)  \right)  \ .\nonumber
\end{align}
Given $\ell\in\mathcal{P}\left(  \mathbb{O}_{\Delta}\right)  ,$ for any
$x_{1}\in\Lambda$ we can integrate over the trajectories of the stationary
process $\left(  \sigma_{x_{1}}\left(  t\right)  ,t\in I\right)  $ keeping
fixed its values at $\delta x_{2}$ if $\left(  x_{1},\delta x_{2}\right)
\in\ell.$ This integral can be computed explicitely. Indeed, setting $V\left(
\ell\right)  :=\left(
%TCIMACRO{\tbigcup \limits_{e\in\ell}}%
%BeginExpansion
{\textstyle\bigcup\limits_{e\in\ell}}
%EndExpansion
V_{e}\right)  $ and denoting by
\begin{equation}
V^{\prime}\left(  \ell\right)  :=%
%TCIMACRO{\dbigcup \limits_{e\in\ell}}%
%BeginExpansion
{\displaystyle\bigcup\limits_{e\in\ell}}
%EndExpansion
\left\{  z\in\Delta\cup\overline{\partial}\Delta:z_{1}=x_{1},z_{2}%
=x_{2}+1,\left(  x_{1},\delta x_{2}\right)  \in V_{e}\right\}  \ ,
\end{equation}
we have
\begin{gather}
\mu_{I,\Lambda}^{\xi}\left[  \prod\limits_{e\in\ell}\mathbf{1}_{e}\left(
\left\{  x,y\right\}  \right)  \left(  e^{W\left(  \sigma_{x},\sigma
_{y}\right)  }-1\right)  \right]  =\\
\mu_{I,\Lambda}^{\xi}\left[  \mu_{I,\Lambda}^{\xi}\left[  \prod\limits_{e\in
\ell}\mathbf{1}_{e}\left(  \left\{  x,y\right\}  \right)  \left(  e^{W\left(
\sigma_{x},\sigma_{y}\right)  }-1\right)  |\left\{  \sigma_{x_{1}}\left(
\delta x_{2}\right)  \right\}  _{\left(  x_{1},\delta x_{2}\right)  \in
V\left(  \ell\right)  \cup V^{\prime}\left(  \ell\right)  }\right]  \right]
\ ,\nonumber
\end{gather}
where we have set $\mu_{I,\Lambda}^{\xi}:=\mu_{I,\Lambda}\left[  \cdot
|\sigma\left(  \pm\frac{\beta}{2}\right)  =\xi^{\pm}\right]  .$

If $\left(  x_{1},\delta x_{2}\right)  ,\left(  x_{1},\delta y_{2}\right)  \in
V\left(  \ell\right)  $ such that $y_{2}\geq x_{2}+2,$ and there is no other
$\left(  x_{1},\delta z_{2}\right)  \in V\left(  \ell\right)  $ such that
$x_{2}+2\leq z_{2}\leq y_{2}-1,$ we can integrate over the trajectories of
$\left(  \sigma_{x_{1}}\left(  t\right)  ,t\in I\right)  $ with given values
at $t=\delta x_{2}+\delta,\delta y_{2}.$ Let
\begin{align}
x_{2}^{\left(  1\right)  }  &  :=\min\left\{  z_{2}\in\mathbb{Z}:\left(
x_{1},\delta z_{2}\right)  \in V\left(  \ell\right)  \right\}  \ ,\\
x_{2}^{\left(  i+1\right)  }  &  :=\min\left\{  z_{2}\in\mathbb{Z}:\left(
x_{1},\delta z_{2}\right)  \in V\left(  \ell\right)  \backslash%
%TCIMACRO{\dbigcup \limits_{j=1}^{i}}%
%BeginExpansion
{\displaystyle\bigcup\limits_{j=1}^{i}}
%EndExpansion
\left(  x_{1},\delta x_{2}^{\left(  j\right)  }\right)  \right\}
\;,\;i\geq1\ .
\end{align}
Then, $T_{\ell}\left(  x_{1}\right)  :=\left\{  z_{2}\in\mathbb{Z}:\left(
x_{1},\delta z_{2}\right)  \in V\left(  \ell\right)  \right\}  $ can be
represented as the ordered set $T_{\ell}\left(  x_{1}\right)  =\left\{
x_{2}^{\left(  1\right)  },..,x_{2}^{\left(  \left\vert T_{\ell}\left(
x_{i}\right)  \right\vert \right)  }\right\}  .$ For any $i=1,..,\left\vert
T_{\ell}\left(  x_{1}\right)  \right\vert ,$ we denote by
\begin{equation}
y_{2}^{\left(  i\right)  }:=\left\{  z_{2}\in\mathbb{Z}:\left(  x_{1},\delta
z_{2}\right)  \in\Delta\backslash V\left(  \ell\right)  ,z_{2}=x_{2}^{\left(
i\right)  }+1\right\}
\end{equation}
and set
\begin{align}
\overline{T}_{\ell}\left(  x_{1}\right)   &  :=\left\{  x_{2}^{\left(
1\right)  },..,x_{2}^{\left(  \left\vert V_{\ell}\left(  x_{i}\right)
\right\vert \right)  },x_{2}^{\left(  \left\vert V_{\ell}\left(  x_{i}\right)
\right\vert +1\right)  }:=\frac{\beta}{2\delta}\right\}  \ ,\\
\Gamma_{\ell}\left(  x_{1}\right)   &  :=\left\{  -\frac{\beta}{2\delta
}=:y_{2}^{\left(  0\right)  },y_{2}^{\left(  1\right)  },..,y_{2}^{\left(
\left\vert V_{\ell}\left(  x_{1}\right)  \right\vert \right)  }\right\}  \ .
\end{align}
Hence, denoting by
\begin{equation}
V_{\ell}\left(  x_{1}\right)  :=%
%TCIMACRO{\dbigcup \limits_{x_{2}\in T_{\ell}\left(  x_{1}\right)  }}%
%BeginExpansion
{\displaystyle\bigcup\limits_{x_{2}\in T_{\ell}\left(  x_{1}\right)  }}
%EndExpansion
\left\{  x\in\Delta:x=\left(  x_{1},\delta x_{2}\right)  \right\}  \ ,
\end{equation}
we get
\begin{gather}
\int\mu_{I}^{\xi_{x_{1}}}\left(  \text{d}\sigma_{x_{1}}|\left\{  \sigma
_{x_{1}}\left(  \delta x_{2}\right)  \right\}  _{x_{2}\in T_{x_{1}}\left(
\ell\right)  \cup\Gamma_{x_{1}}\left(  \ell\right)  }\right)  \prod
\limits_{e\in\ell\ :\ V_{e}\cap V_{\ell}\left(  x_{1}\right)  \neq\varnothing
}\mathbf{1}_{e}\left(  \left\{  x,y\right\}  \right)  \left(  e^{W\left(
\sigma_{x},\sigma_{y}\right)  }-1\right)  =\\
\int%
%TCIMACRO{\dbigotimes \limits_{i=1}^{\left\vert V_{\ell}\left(  x_{1}\right)
%\right\vert }}%
%BeginExpansion
{\displaystyle\bigotimes\limits_{i=1}^{\left\vert V_{\ell}\left(
x_{1}\right)  \right\vert }}
%EndExpansion
\mu_{I}^{\xi_{x_{1}}}\left(  \text{d}\sigma_{x_{1}}|\sigma_{x_{1}}\left(
\delta x_{2}^{\left(  i\right)  }\right)  ,\sigma_{x_{1}}\left(  \delta
y_{2}^{\left(  i\right)  }\right)  \right)  \prod\limits_{e\in\ell
\ :\ V_{e}\cap V_{\ell}\left(  x_{1}\right)  \neq\varnothing}\mathbf{1}%
_{e}\left(  \left\{  x,y\right\}  \right)  \left(  e^{W\left(  \sigma
_{x},\sigma_{y}\right)  }-1\right)  \times\nonumber\\
\times\prod\limits_{i=0}^{\left\vert T_{\ell}\left(  x_{1}\right)  \right\vert
}\frac{1+\sigma_{x_{1}}\left(  \delta y_{2}^{\left(  i\right)  }\right)
\sigma_{x_{1}}\left(  \delta x_{2}^{\left(  i+1\right)  }\right)
e^{-2h\delta\left(  x_{2}^{\left(  i+1\right)  }-y_{2}^{\left(  i\right)
}\right)  }}{2}\ ,\nonumber
\end{gather}
where we have used that, given $x_{1}\in\Lambda,$ for any $t,s\in I$ with
$t>s,\eta,\eta^{\prime}\in\left\{  -1,1\right\}  ,$%
\begin{equation}
\mu_{I}\left[  \mathbf{1}_{\left\{  \eta^{\prime}\right\}  }\left(  \sigma
_{x}\left(  t\right)  \right)  |\sigma_{x}\left(  s\right)  =\eta\right]
=\frac{1+\eta^{\prime}\eta e^{-2h\left(  t-s\right)  }}{2}=\left\{
\begin{array}
[c]{ll}%
\frac{1+e^{-2h\left(  y_{2}-x_{2}\right)  }}{2} & \text{if }\eta^{\prime}%
=\eta\\
\frac{1-e^{-2h\left(  t-s\right)  }}{2} & \text{if }\eta^{\prime}=-\eta
\end{array}
\right.  \ .
\end{equation}
Therefore, setting
\begin{equation}
\Lambda\left(  \ell\right)  :=\left\{  x_{1}\in\Lambda:\left\vert V_{\ell
}\left(  x_{1}\right)  \right\vert \geq1\right\}  \ ,
\end{equation}
since $\mu_{\Lambda,I}^{\xi}=%
%TCIMACRO{\dbigotimes \limits_{x_{1}\in\Lambda\left(  \ell\right)  }}%
%BeginExpansion
{\displaystyle\bigotimes\limits_{x_{1}\in\Lambda\left(  \ell\right)  }}
%EndExpansion
\mu_{I}^{\xi_{x_{1}}},$ we obtain
\begin{align}
Z_{\Delta}\left(  \xi\right)   &  =\sum_{\ell\in\mathcal{P}\left(
\mathbb{O}_{\Delta}\right)  }\int%
%TCIMACRO{\dbigotimes \limits_{x_{1}\in\Lambda\left(  \ell\right)  }}%
%BeginExpansion
{\displaystyle\bigotimes\limits_{x_{1}\in\Lambda\left(  \ell\right)  }}
%EndExpansion
\mu_{I}^{\xi_{x_{1}}}\left(  \text{d}\sigma_{x_{1}}|\left\{  \sigma_{x_{1}%
}\left(  \delta x_{2}\right)  \right\}  _{x_{2}\in T_{x_{1}}\left(
\ell\right)  \cup\Gamma_{x_{1}}\left(  \ell\right)  }\right)  \times\\
&  \times\prod\limits_{x_{1}\in\Lambda\backslash\Lambda\left(  \ell\right)
}\frac{1+\xi_{x_{1}}^{+}\xi_{x_{1}}^{-}e^{-2h\beta}}{2}\prod\limits_{e\in\ell
}\mathbf{1}_{e}\left(  \left\{  x,y\right\}  \right)  \left(  e^{W\left(
\sigma_{x},\sigma_{y}\right)  }-1\right)  \times\nonumber\\
&  \times\prod\limits_{i=0}^{\left\vert T_{\ell}\left(  x_{1}\right)
\right\vert }\frac{1+\sigma_{x_{1}}\left(  \delta y_{2}^{\left(  i\right)
}\right)  \sigma_{x_{1}}\left(  \delta x_{2}^{\left(  i+1\right)  }\right)
e^{-2h\delta\left(  x_{2}^{\left(  i+1\right)  }-y_{2}^{\left(  i\right)
}\right)  }}{2}\ .\nonumber
\end{align}

It can be useful to represent $Z_{\Delta}\left(  \xi\right)  $ as the
partition function of a classical spin system. Indeed, we can consider a
classical spin system on $\mathbb{Z}\times\delta\mathbb{Z}$ by associating to
any lattice point $\left(  x_{1},\delta x_{2}\right)  \in\mathbb{Z}%
\times\delta\mathbb{Z}$ a random element, which we will still call
\emph{spin}, taking values in the space $\mathcal{D}_{\delta}$ of piecewise
$\left\{  -1,1\right\}  $-valued functions on $\left[  0,\delta\right]  $
endowed with the Skorokhod topology, namely
\begin{equation}
\mathcal{D}_{\delta}:=\left\{  \sigma\in\mathbb{D}\left(  \left[
0,\delta\right]  ,\left\{  -1,1\right\}  \right)  \right\}  \ .
\end{equation}
Setting $\mathcal{S}:=\mathcal{D}_{\delta}^{\mathbb{Z}},$ we denote by
$\mathbf{S}$ the injection of $\mathcal{D}$ in $\mathcal{S}$ such that
\begin{equation}
\mathcal{D}\ni\sigma\longmapsto\mathbf{S}\left(  \sigma\right)  :=\left\{
\sigma^{\left(  k\right)  }\right\}  _{k\in\mathbb{Z}}\in\mathcal{S}\ ,
\end{equation}
where $\forall k\in\mathbb{Z},\sigma^{\left(  k\right)  }$ denotes the element
of $\mathcal{D}_{\delta}$ representing the function $\left[  0,\delta\right]
\ni t\longmapsto\sigma^{\left(  k\right)  }\left(  t\right)  :=\sigma\left(
k\delta+t\right)  \in\left\{  -1,1\right\}  .$ Equipping $\mathcal{S}$ with
the product topology, the push-forward of $\mu$ w.r.t. $\mathbf{S}$ on
$\left(  \mathcal{S},\mathfrak{S}\right)  $ with $\mathfrak{S}$ the product
$\sigma$-algebra can be written as
\begin{equation}
\mu\circ\mathbf{S}^{-1}\left(  \text{d}\left\{  \sigma^{\left(  k\right)
}\right\}  _{k\in\mathbb{Z}}\right)  =2^{\frac{\beta}{\delta}-1}%
\bigotimes\limits_{k\in\mathbb{Z}}\mu^{\delta}\left(  \text{d}\sigma^{\left(
k\right)  }\right)  \prod\limits_{k\in\mathbb{Z}}\delta_{\sigma^{\left(
k\right)  }\left(  \delta\right)  ,\sigma^{\left(  k+1\right)  }\left(
0\right)  }\ , \label{v1}%
\end{equation}
with
\begin{equation}
\mu^{\delta}\left(  \text{d}\sigma^{\left(  k\right)  }\right)  :=\mu_{\left[
\delta k,\delta\left(  k+1\right)  \right]  }\left(  \text{d}\sigma\right)
\;,\;k\in\mathbb{Z}\ .
\end{equation}

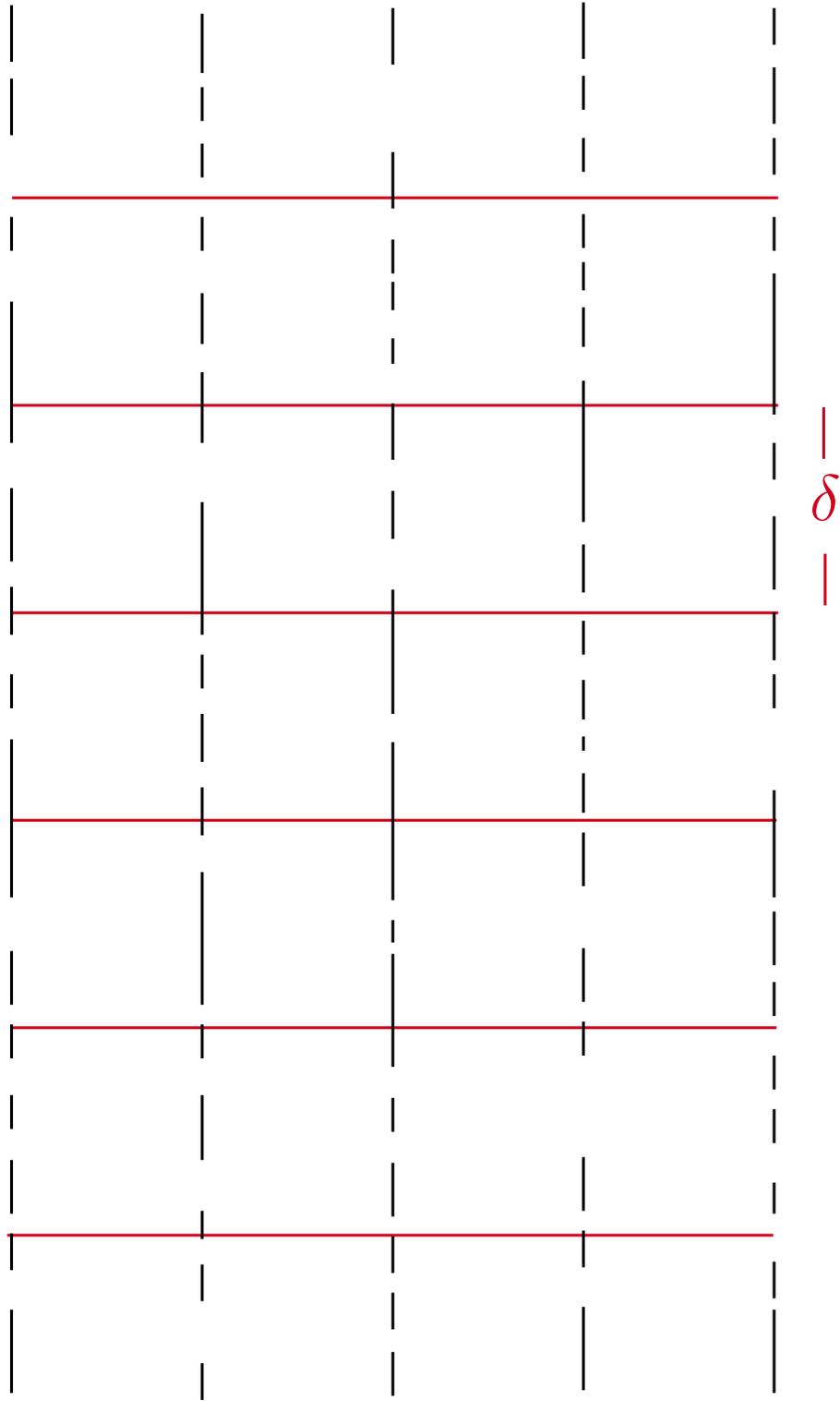
\begin{figure}[ptbh]
\centering
\resizebox{0.75\textwidth}{!}{
\tikzset{every picture/.style={line width=0.75pt}}
\begin{tikzpicture}[x=0.75pt,y=0.75pt,yscale=-1,xscale=1]
\draw [color={rgb, 255:red, 208; green, 2; blue, 27 }  ,draw opacity=1 ]   (85.26,74.23) -- (354.5,74.23) ;
\draw [color={rgb, 255:red, 208; green, 2; blue, 27 }  ,draw opacity=1 ]   (85.26,147.71) -- (354.5,147.71) ;
\draw [color={rgb, 255:red, 208; green, 2; blue, 27 }  ,draw opacity=1 ]   (85.26,221.19) -- (354.5,221.19) ;
\draw [color={rgb, 255:red, 208; green, 2; blue, 27 }  ,draw opacity=1 ]   (84.67,294.68) -- (353.91,294.68) ;
\draw [color={rgb, 255:red, 208; green, 2; blue, 27 }  ,draw opacity=1 ]   (84.67,368.16) -- (353.91,368.16) ;
\draw [color={rgb, 255:red, 208; green, 2; blue, 27 }  ,draw opacity=1 ]   (83.5,441.64) -- (352.74,441.64) ;
\draw [color={rgb, 255:red, 208; green, 2; blue, 27 }  ,draw opacity=1 ]   (371,200.31) -- (371,218.57) ;
\draw [color={rgb, 255:red, 208; green, 2; blue, 27 }  ,draw opacity=1 ]   (370.5,148.4) -- (370.5,166.65) ;
\draw    (152.06,452) -- (152.06,465) ;
\draw    (152.06,433) -- (152.06,443) ;
\draw    (152.06,392) -- (152.06,415) ;
\draw    (152.06,367) -- (152.06,379) ;
\draw    (152.06,313) -- (152.06,360) ;
\draw    (152.06,257) -- (152.06,274) ;
\draw    (152.06,236) -- (152.06,248) ;
\draw    (152.06,182) -- (152.06,229) ;
\draw    (152.06,136) -- (152.06,161) ;
\draw    (152.06,93) -- (152.06,81) ;
\draw    (152.06,67) -- (152.06,55) ;
\draw    (152.06,47) -- (152.06,35) ;
\draw    (85.06,468) -- (85.06,497.53) ;
\draw    (85.06,441) -- (85.06,454) ;
\draw    (85.06,415) -- (85.06,434) ;
\draw    (85.06,392) -- (85.06,404) ;
\draw    (85.06,367) -- (85.06,379) ;
\draw    (85.06,341) -- (85.06,360) ;
\draw    (85.06,266) -- (85.06,322) ;
\draw    (85.06,243) -- (85.06,255) ;
\draw    (85.06,212) -- (85.06,229) ;
\draw    (85.06,177) -- (85.06,203) ;
\draw    (85.06,111) -- (85.06,161) ;
\draw    (85.06,93) -- (85.06,81) ;
\draw    (85.06,52) -- (85.06,32) ;
\draw    (85.06,26) -- (85.06,6) ;
\draw    (219.06,483) -- (219.06,498.53) ;
\draw    (219.06,442) -- (219.06,455) ;
\draw    (219.06,416) -- (219.06,435) ;
\draw    (219.06,393) -- (219.06,405) ;
\draw    (219.06,342) -- (219.06,382) ;
\draw    (219.06,267) -- (219.06,323) ;
\draw    (219.06,213) -- (219.06,257) ;
\draw    (219.06,178) -- (219.06,195) ;
\draw    (219.06,147) -- (219.06,167) ;
\draw    (219.06,101) -- (219.06,89) ;
\draw    (219.06,78) -- (219.06,58) ;
\draw    (219.06,27) -- (219.06,7) ;
\draw    (286.06,467) -- (286.06,496.53) ;
\draw    (286.06,440) -- (286.06,453) ;
\draw    (286.06,414) -- (286.06,433) ;
\draw    (286.06,366) -- (286.06,378) ;
\draw    (286.06,340) -- (286.06,359) ;
\draw    (286.06,265) -- (286.06,270) ;
\draw    (286.06,224) -- (286.06,236) ;
\draw    (286.06,197) -- (286.06,214) ;
\draw    (286.06,97) -- (286.06,107) ;
\draw    (286.06,92) -- (286.06,80) ;
\draw    (286.06,43) -- (286.06,31) ;
\draw    (286.06,25) -- (286.06,5) ;
\draw    (353.06,468) -- (353.06,497.53) ;
\draw    (353.06,451) -- (353.06,464) ;
\draw    (353.06,423) -- (353.06,434) ;
\draw    (353.06,397) -- (353.06,409) ;
\draw    (353.06,352) -- (353.06,364) ;
\draw    (353.06,327) -- (353.06,346) ;
\draw    (353.06,284) -- (353.06,322) ;
\draw    (353.06,243) -- (353.06,255) ;
\draw    (353.06,221) -- (353.06,238) ;
\draw    (353.06,187) -- (353.06,213) ;
\draw    (353.06,101) -- (353.06,151) ;
\draw    (353.06,93) -- (353.06,81) ;
\draw    (353.06,66) -- (353.06,53) ;
\draw    (353.06,48) -- (353.06,28) ;
\draw    (152.06,30) -- (152.06,9) ;
\draw    (152.06,108) -- (152.06,126) ;
\draw    (152.06,283) -- (152.06,300) ;
\draw    (152.06,487) -- (152.06,500) ;
\draw    (219.06,104) -- (219.06,114) ;
\draw    (219.06,124) -- (219.06,133) ;
\draw    (219.06,462) -- (219.06,475) ;
\draw    (219.06,330) -- (219.06,338) ;
\draw    (286.06,65) -- (286.06,53) ;
\draw    (286.06,139) -- (286.06,189) ;
\draw    (286.06,113) -- (286.06,127) ;
\draw    (286.06,245) -- (286.06,259) ;
\draw    (286.06,278) -- (286.06,292) ;
\draw    (286.06,299) -- (286.06,318) ;
\draw    (353.06,20) -- (353.06,7) ;
\draw    (353.06,161) -- (353.06,174) ;
\draw    (353.06,378) -- (353.06,390) ;
\draw (371,180.35) node  [font=\Large,color={rgb, 255:red, 208; green, 2; blue, 27 }  ,opacity=1 ]  {$\delta $};
\end{tikzpicture}}
%If not, use
%\vspace{5cm}       % Give the correct figure height in cm
%Give a unique label
\caption{The construction of the \emph{spins} in $\mathcal{D}_{\delta}.$}%
\end{figure}

For any $x=\left(  x_{1},\delta x_{2}\right)  \in\Delta,$ with a little abuse
of notation we denote by $\sigma_{x}$ the element of $\mathcal{D}_{\delta}$
representing the function $\left[  0,\delta\right]  \ni t\longmapsto
\sigma_{x_{1}}\left(  \delta x_{2}+t\right)  \in\left\{  -1,1\right\}  .$
Hence, we denote by $\mathcal{S}_{\Delta}:=\mathcal{D}_{\delta}^{\Delta}$ and
by $\mathfrak{S}_{\Delta}:=\left\{  A\cap\mathcal{S}_{\Delta}:A\in
\mathfrak{S}\right\}  .$ In particular, we can represent the Gibbs probability
measure $\nu_{I,\Lambda}$ on $\left(  \mathcal{D}_{I}^{\Lambda},\mathcal{F}%
_{I}^{\Lambda}\right)  $ specified by (\ref{Gibbs1}), with fixed b.c. $\xi
\in\Omega_{\overline{\partial}\Delta}$ at $\overline{\partial}\Delta$ and free
b.c.'s at $\partial\Delta\backslash\overline{\partial}\Delta,$ by the Gibbs
probability measure $\nu_{\delta}^{\xi}\left(  \text{d}\sigma_{\Delta}\right)
$ on $\left(  \mathcal{S}_{\Delta},\mathfrak{S}_{\Delta}\right)  $ specified
by the density
\begin{equation}
Z_{\delta}^{-1}\left(  \xi\right)  \exp\left[  \sum_{x,y\in\Delta}\left(
W_{1}\left(  \sigma_{x},\sigma_{y}\right)  +\mathbf{1}_{\mathbb{O}_{\Delta}%
}\left(  \left\{  x,y\right\}  \right)  W\left(  \sigma_{x},\sigma_{y}\right)
\right)  \right]  \mathbf{1}_{\left\{  \xi^{-}\right\}  }\left(
\sigma_{\partial^{-}\Delta}\right)  \mathbf{1}_{\left\{  \xi^{+}\right\}
}\left(  \sigma_{\partial^{+}\Delta}\right)  \label{Gibbs2}%
\end{equation}
w.r.t. the reference measure $\mu^{\delta}\left(  \text{d}\sigma_{\Delta
}\right)  :=\bigotimes\limits_{x\in\Delta}\mu^{\delta}\left(  \text{d}%
\sigma_{x}\right)  ,$ associated to the interaction $W_{1}+W,$ where, in view
of the fact that, by (\ref{v1}), for any $x_{1},x_{2}\in\mathbb{Z},$ the spins
$\sigma_{\left(  x_{1},\delta x_{2}\right)  },\sigma_{\left(  x_{1},\delta
x_{2}+\delta\right)  }\in\mathcal{D}_{\delta}$ must satisfy the compatibility
condition $\sigma_{\left(  x_{1},\delta x_{2}\right)  }\left(  \delta\right)
=\sigma_{\left(  x_{1},\delta x_{2}+\delta\right)  }\left(  0\right)  ,$

\begin{enumerate}
\item $W_{1}\left(  \sigma_{x},\sigma_{y}\right)  =0$ if $\left\{
x,y\right\}  \in\mathbb{V}$ and if $x_{2}<y_{2},\sigma_{x}\left(
\delta\right)  =\sigma_{y}\left(  0\right)  $ or if $y_{2}<x_{2},\sigma
_{y}\left(  \delta\right)  =\sigma_{x}\left(  0\right)  ;$

\item $W_{1}\left(  \sigma_{x},\sigma_{y}\right)  =-\infty$ if $\left\{
x,y\right\}  \in\mathbb{V}$ and if $x_{2}<y_{2},\sigma_{x}\left(
\delta\right)  \neq\sigma_{y}\left(  0\right)  $ or if $y_{2}<x_{2},\sigma
_{y}\left(  \delta\right)  \neq\sigma_{x}\left(  0\right)  .$
\end{enumerate}

Then, by the definition of the potential $W_{1},$%
\begin{equation}
e^{W_{1}\left(  \sigma_{x},\sigma_{y}\right)  }=\left[  \delta_{x_{1},y_{1}%
}\left(  \delta_{y_{2},x_{2}+1}\delta_{\sigma_{x}\left(  \delta\right)
,\sigma_{y}\left(  0\right)  }+\delta_{x_{2},y_{2}+1}\delta_{\sigma_{y}\left(
\delta\right)  ,\sigma_{x}\left(  0\right)  }\right)  +\left(  1-\delta
_{x_{1},y_{1}}\right)  \right]  \ .
\end{equation}
Hence,
\begin{align}
Z_{\Delta}\left(  \xi\right)   &  =\sum_{\ell\in\mathcal{P}\left(
\mathbb{O}_{\Delta}\right)  }\prod\limits_{x_{1}\in\Lambda\left(  \ell\right)
}2^{\left\vert T_{\ell}\left(  x_{1}\right)  \right\vert }\int%
%TCIMACRO{\dbigotimes \limits_{x_{2}\in T_{\ell}\left(  x_{1}\right)  }}%
%BeginExpansion
{\displaystyle\bigotimes\limits_{x_{2}\in T_{\ell}\left(  x_{1}\right)  }}
%EndExpansion
\mu^{\delta}\left(  \text{d}\sigma_{\left(  x_{1},\delta x_{2}\right)
}\right)
%TCIMACRO{\dprod \limits_{x_{2}\in T_{\ell}\left(  x_{1}\right)  }}%
%BeginExpansion
{\displaystyle\prod\limits_{x_{2}\in T_{\ell}\left(  x_{1}\right)  }}
%EndExpansion
e^{W_{1}\left(  \sigma_{\left(  x_{1},\delta x_{2}\right)  },\sigma_{\left(
x_{1},\delta x_{2}+\delta\right)  }\right)  }\times\\
&  \times\prod\limits_{e\in\ell}\mathbf{1}_{e}\left(  \left\{  x,y\right\}
\right)  \left(  e^{W\left(  \sigma_{x},\sigma_{y}\right)  }-1\right)
\prod\limits_{\left(  x,y\right)  \in\partial^{+}\Delta\times\partial
^{-}\Delta\ :\ x_{1}=y_{1}\ ,\ x_{1}\in\Lambda\backslash\Lambda\left(
\ell\right)  }\frac{1+\xi_{x}\xi_{y}e^{-2h\beta}}{2}\times\nonumber\\
&  \times\prod\limits_{i=0}^{\left\vert T_{\ell}\left(  x_{1}\right)
\right\vert }\frac{1+\sigma_{\left(  x_{1},\delta y_{2}^{\left(  i\right)
}\right)  }\sigma_{\left(  x_{1},\delta x_{2}^{\left(  i+1\right)  }\right)
}e^{-2h\delta\left(  x_{2}^{\left(  i+1\right)  }-y_{2}^{\left(  i\right)
}\right)  }}{2}\mathbf{1}_{\left\{  \xi_{x_{1}}^{-}\right\}  }\left(
\sigma_{\left(  x_{1},\delta y_{2}^{\left(  0\right)  }\right)  }\right)
\mathbf{1}_{\left\{  \xi_{x_{1}}^{+}\right\}  }\left(  \sigma_{\left(
x_{1},\delta y_{2}^{\left(  \left\vert T_{\ell}\left(  x_{1}\right)
\right\vert +1\right)  }\right)  }\right)  \ .\nonumber
\end{align}
Let us denote by $\Pi\left(  \ell\right)  $ the set of paths in $\left(
\Delta,\mathbb{V}_{\Delta}\right)  $ connecting any couple of points $\left(
y,x\right)  \in V\left(  \ell\right)  \cup\partial^{-}\Delta\times V\left(
\ell\right)  \cup\partial^{+}\Delta$ such that:

\begin{itemize}
\item if $y=\left(  x_{1},y_{2}\right)  $ with $x_{1}\in\Lambda\left(
\ell\right)  ,x=\left(  x_{1},x_{2}\left(  y\right)  \right)  $ with
$x_{2}\left(  y\right)  :=\min\left\{  z_{2}\in\overline{T}_{\ell}\left(
x_{1}\right)  :z_{2}\geq y_{2}+2\right\}  ;$

\item if $y=\left(  x_{1},-\frac{\beta}{2}\right)  $ with $x_{1}\in
\Lambda\backslash\Lambda\left(  \ell\right)  ,x=\left(  x_{1},\frac{\beta}%
{2}\right)  .$
\end{itemize}

Hence, we can write
\begin{align}
Z_{\Delta}\left(  \xi\right)   &  =\sum_{\ell\in\mathcal{P}\left(
\mathbb{O}_{\Delta}\right)  }2^{\left\vert V\left(  \ell\right)  \right\vert
}\int%
%TCIMACRO{\dbigotimes \limits_{x\in V\left(  \ell\right)  }}%
%BeginExpansion
{\displaystyle\bigotimes\limits_{x\in V\left(  \ell\right)  }}
%EndExpansion
\mu^{\delta}\left(  \text{d}\sigma_{x}\right)
%TCIMACRO{\dprod \limits_{\left\{  x,y\right\}  \in\mathbb{V}_{\Delta}\ :\ x\in
%V\left(  \ell\right)  ,y\in V^{\prime}\left(  \ell\right)  }}%
%BeginExpansion
{\displaystyle\prod\limits_{\left\{  x,y\right\}  \in\mathbb{V}_{\Delta
}\ :\ x\in V\left(  \ell\right)  ,y\in V^{\prime}\left(  \ell\right)  }}
%EndExpansion
e^{W_{1}\left(  \sigma_{x},\sigma_{y}\right)  }\times\label{Zd2}\\
&  \times\prod\limits_{e\in\ell}\mathbf{1}_{e}\left(  \left\{  x,y\right\}
\right)  \left(  e^{W\left(  \sigma_{x},\sigma_{y}\right)  }-1\right)
\times\nonumber\\
&  \times\prod\limits_{\gamma\in\Pi\left(  \ell\right)  }\mathbf{1}%
_{end\left(  \gamma\right)  }\left(  \left\{  x,y\right\}  \right)
\frac{1+\sigma_{x}\sigma_{y}e^{-2h\delta\left\vert x-y\right\vert }}{2}%
\times\nonumber\\
&  \times\left[  \left(  1-\mathbf{1}_{\partial^{+}\Delta}\left(  x\right)
\right)  +\mathbf{1}_{\partial^{+}\Delta}\left(  x\right)  \mathbf{1}_{\xi
_{x}^{+}}\left(  \sigma_{x}\right)  \right]  \left[  \left(  1-\mathbf{1}%
_{\partial^{-}\Delta}\left(  y\right)  \right)  +\mathbf{1}_{\partial
^{-}\Delta}\left(  y\right)  \mathbf{1}_{\xi_{y}^{-}}\left(  \sigma
_{y}\right)  \right]  \ .\nonumber
\end{align}

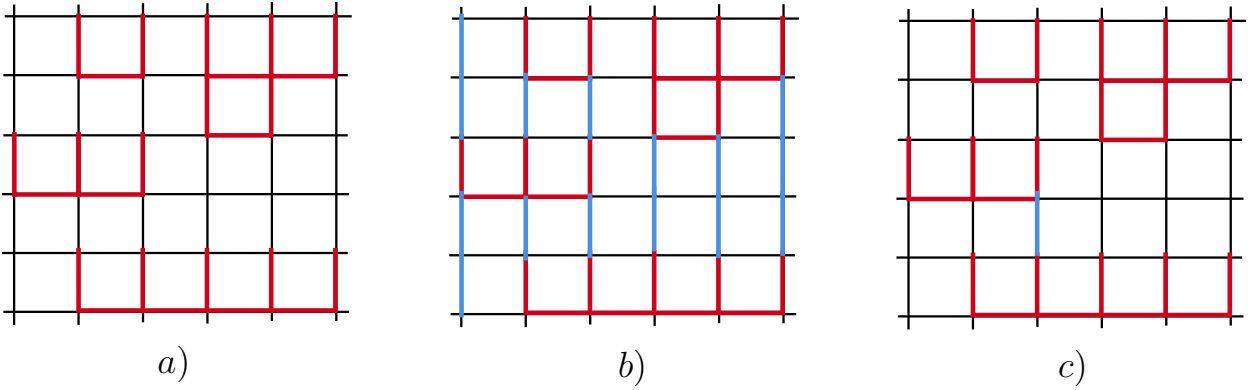
\begin{figure}[ptbh]
\centering
\resizebox{1.0\textwidth}{!}{
\tikzset{every picture/.style={line width=0.75pt}}
\begin{tikzpicture}[x=0.75pt,y=0.75pt,yscale=-1,xscale=1]
\draw    (45,128.51) -- (196.33,128.51) ;
\draw    (45,154.38) -- (195.74,154.38) ;
\draw    (50.27,45.15) -- (50.27,186) ;
\draw    (45.59,102.64) -- (195.74,102.64) ;
\draw    (45.59,76.19) -- (195.74,76.19) ;
\draw    (46.17,50.32) -- (197.5,50.32) ;
\draw    (78.37,45.15) -- (78.37,186) ;
\draw    (106.47,44.57) -- (106.47,184.85) ;
\draw    (45.59,180.25) -- (196.33,180.25) ;
\draw    (134.57,44.57) -- (134.57,185.43) ;
\draw    (162.67,44) -- (162.67,184.28) ;
\draw    (190.77,44) -- (190.77,184.28) ;
\draw [color={rgb, 255:red, 208; green, 2; blue, 27 }  ,draw opacity=1 ][line width=1.5]    (78.37,49.15) -- (78.37,77.65) ;
\draw [color={rgb, 255:red, 208; green, 2; blue, 27 }  ,draw opacity=1 ][line width=1.5]    (106.37,49.15) -- (106.37,77.65) ;
\draw [color={rgb, 255:red, 208; green, 2; blue, 27 }  ,draw opacity=1 ][line width=1.5]    (134.37,49.15) -- (134.37,77.65) ;
\draw [color={rgb, 255:red, 208; green, 2; blue, 27 }  ,draw opacity=1 ][line width=1.5]    (162.37,49.15) -- (162.37,77.65) ;
\draw [color={rgb, 255:red, 208; green, 2; blue, 27 }  ,draw opacity=1 ][line width=1.5]    (190.37,49.15) -- (190.37,77.65) ;
\draw [color={rgb, 255:red, 208; green, 2; blue, 27 }  ,draw opacity=1 ][line width=1.5]    (78.37,76.65) -- (106.37,76.65) ;
\draw [color={rgb, 255:red, 208; green, 2; blue, 27 }  ,draw opacity=1 ][line width=1.5]    (134.37,76.65) -- (162.37,76.65) ;
\draw [color={rgb, 255:red, 208; green, 2; blue, 27 }  ,draw opacity=1 ][line width=1.5]    (162.37,76.65) -- (190.37,76.65) ;
\draw [color={rgb, 255:red, 208; green, 2; blue, 27 }  ,draw opacity=1 ][line width=1.5]    (134.37,75.15) -- (134.37,103.65) ;
\draw [color={rgb, 255:red, 208; green, 2; blue, 27 }  ,draw opacity=1 ][line width=1.5]    (162.37,75.15) -- (162.37,103.65) ;
\draw [color={rgb, 255:red, 208; green, 2; blue, 27 }  ,draw opacity=1 ][line width=1.5]    (134.37,102.65) -- (162.37,102.65) ;
\draw [color={rgb, 255:red, 208; green, 2; blue, 27 }  ,draw opacity=1 ][line width=1.5]    (50.37,101.15) -- (50.37,129.65) ;
\draw [color={rgb, 255:red, 208; green, 2; blue, 27 }  ,draw opacity=1 ][line width=1.5]    (78.37,101.15) -- (78.37,129.65) ;
\draw [color={rgb, 255:red, 208; green, 2; blue, 27 }  ,draw opacity=1 ][line width=1.5]    (50.37,128.65) -- (78.37,128.65) ;
\draw [color={rgb, 255:red, 208; green, 2; blue, 27 }  ,draw opacity=1 ][line width=1.5]    (78.37,101.15) -- (78.37,129.65) ;
\draw [color={rgb, 255:red, 208; green, 2; blue, 27 }  ,draw opacity=1 ][line width=1.5]    (106.37,101.15) -- (106.37,129.65) ;
\draw [color={rgb, 255:red, 208; green, 2; blue, 27 }  ,draw opacity=1 ][line width=1.5]    (78.37,128.65) -- (106.37,128.65) ;
\draw [color={rgb, 255:red, 208; green, 2; blue, 27 }  ,draw opacity=1 ][line width=1.5]    (78.37,152.15) -- (78.37,180.65) ;
\draw [color={rgb, 255:red, 208; green, 2; blue, 27 }  ,draw opacity=1 ][line width=1.5]    (78.37,179.65) -- (106.37,179.65) ;
\draw [color={rgb, 255:red, 208; green, 2; blue, 27 }  ,draw opacity=1 ][line width=1.5]    (106.37,152.15) -- (106.37,180.65) ;
\draw [color={rgb, 255:red, 208; green, 2; blue, 27 }  ,draw opacity=1 ][line width=1.5]    (134.37,152.15) -- (134.37,180.65) ;
\draw [color={rgb, 255:red, 208; green, 2; blue, 27 }  ,draw opacity=1 ][line width=1.5]    (106.37,179.65) -- (134.37,179.65) ;
\draw [color={rgb, 255:red, 208; green, 2; blue, 27 }  ,draw opacity=1 ][line width=1.5]    (134.37,179.65) -- (162.37,179.65) ;
\draw [color={rgb, 255:red, 208; green, 2; blue, 27 }  ,draw opacity=1 ][line width=1.5]    (162.37,152.15) -- (162.37,180.65) ;
\draw [color={rgb, 255:red, 208; green, 2; blue, 27 }  ,draw opacity=1 ][line width=1.5]    (190.37,152.15) -- (190.37,180.65) ;
\draw [color={rgb, 255:red, 208; green, 2; blue, 27 }  ,draw opacity=1 ][line width=1.5]    (162.37,179.65) -- (190.37,179.65) ;
\draw    (240,129.51) -- (391.33,129.51) ;
\draw    (240,155.38) -- (390.74,155.38) ;
\draw    (245.27,46.15) -- (245.27,187) ;
\draw    (240.59,103.64) -- (390.74,103.64) ;
\draw    (240.59,77.19) -- (390.74,77.19) ;
\draw    (241.17,51.32) -- (392.5,51.32) ;
\draw    (273.37,46.15) -- (273.37,187) ;
\draw    (301.47,45.57) -- (301.47,185.85) ;
\draw    (240.59,181.25) -- (391.33,181.25) ;
\draw    (329.57,45.57) -- (329.57,186.43) ;
\draw    (357.67,45) -- (357.67,185.28) ;
\draw    (385.77,45) -- (385.77,185.28) ;
\draw [color={rgb, 255:red, 208; green, 2; blue, 27 }  ,draw opacity=1 ][line width=1.5]    (273.37,50.15) -- (273.37,78.65) ;
\draw [color={rgb, 255:red, 208; green, 2; blue, 27 }  ,draw opacity=1 ][line width=1.5]    (301.37,50.15) -- (301.37,78.65) ;
\draw [color={rgb, 255:red, 208; green, 2; blue, 27 }  ,draw opacity=1 ][line width=1.5]    (329.37,50.15) -- (329.37,78.65) ;
\draw [color={rgb, 255:red, 208; green, 2; blue, 27 }  ,draw opacity=1 ][line width=1.5]    (357.37,50.15) -- (357.37,78.65) ;
\draw [color={rgb, 255:red, 208; green, 2; blue, 27 }  ,draw opacity=1 ][line width=1.5]    (385.37,50.15) -- (385.37,78.65) ;
\draw [color={rgb, 255:red, 208; green, 2; blue, 27 }  ,draw opacity=1 ][line width=1.5]    (273.37,77.65) -- (301.37,77.65) ;
\draw [color={rgb, 255:red, 208; green, 2; blue, 27 }  ,draw opacity=1 ][line width=1.5]    (329.37,77.65) -- (357.37,77.65) ;
\draw [color={rgb, 255:red, 208; green, 2; blue, 27 }  ,draw opacity=1 ][line width=1.5]    (357.37,77.65) -- (385.37,77.65) ;
\draw [color={rgb, 255:red, 208; green, 2; blue, 27 }  ,draw opacity=1 ][line width=1.5]    (329.37,76.15) -- (329.37,104.65) ;
\draw [color={rgb, 255:red, 208; green, 2; blue, 27 }  ,draw opacity=1 ][line width=1.5]    (357.37,76.15) -- (357.37,104.65) ;
\draw [color={rgb, 255:red, 208; green, 2; blue, 27 }  ,draw opacity=1 ][line width=1.5]    (329.37,103.65) -- (357.37,103.65) ;
\draw [color={rgb, 255:red, 208; green, 2; blue, 27 }  ,draw opacity=1 ][line width=1.5]    (245.37,102.15) -- (245.37,130.65) ;
\draw [color={rgb, 255:red, 208; green, 2; blue, 27 }  ,draw opacity=1 ][line width=1.5]    (273.37,102.15) -- (273.37,130.65) ;
\draw [color={rgb, 255:red, 208; green, 2; blue, 27 }  ,draw opacity=1 ][line width=1.5]    (245.37,129.65) -- (273.37,129.65) ;
\draw [color={rgb, 255:red, 208; green, 2; blue, 27 }  ,draw opacity=1 ][line width=1.5]    (273.37,102.15) -- (273.37,130.65) ;
\draw [color={rgb, 255:red, 208; green, 2; blue, 27 }  ,draw opacity=1 ][line width=1.5]    (301.37,102.15) -- (301.37,130.65) ;
\draw [color={rgb, 255:red, 208; green, 2; blue, 27 }  ,draw opacity=1 ][line width=1.5]    (273.37,129.65) -- (301.37,129.65) ;
\draw [color={rgb, 255:red, 208; green, 2; blue, 27 }  ,draw opacity=1 ][line width=1.5]    (273.37,153.15) -- (273.37,181.65) ;
\draw [color={rgb, 255:red, 208; green, 2; blue, 27 }  ,draw opacity=1 ][line width=1.5]    (273.37,180.65) -- (301.37,180.65) ;
\draw [color={rgb, 255:red, 208; green, 2; blue, 27 }  ,draw opacity=1 ][line width=1.5]    (301.37,153.15) -- (301.37,181.65) ;
\draw [color={rgb, 255:red, 208; green, 2; blue, 27 }  ,draw opacity=1 ][line width=1.5]    (329.37,153.15) -- (329.37,181.65) ;
\draw [color={rgb, 255:red, 208; green, 2; blue, 27 }  ,draw opacity=1 ][line width=1.5]    (301.37,180.65) -- (329.37,180.65) ;
\draw [color={rgb, 255:red, 208; green, 2; blue, 27 }  ,draw opacity=1 ][line width=1.5]    (329.37,180.65) -- (357.37,180.65) ;
\draw [color={rgb, 255:red, 208; green, 2; blue, 27 }  ,draw opacity=1 ][line width=1.5]    (357.37,153.15) -- (357.37,181.65) ;
\draw [color={rgb, 255:red, 208; green, 2; blue, 27 }  ,draw opacity=1 ][line width=1.5]    (385.37,153.15) -- (385.37,181.65) ;
\draw [color={rgb, 255:red, 208; green, 2; blue, 27 }  ,draw opacity=1 ][line width=1.5]    (357.37,180.65) -- (385.37,180.65) ;
\draw    (435,130.51) -- (586.33,130.51) ;
\draw    (435,156.38) -- (585.74,156.38) ;
\draw    (440.27,47.15) -- (440.27,188) ;
\draw    (435.59,104.64) -- (585.74,104.64) ;
\draw    (435.59,78.19) -- (585.74,78.19) ;
\draw    (436.17,52.32) -- (587.5,52.32) ;
\draw    (468.37,47.15) -- (468.37,188) ;
\draw    (496.47,46.57) -- (496.47,186.85) ;
\draw    (435.59,182.25) -- (586.33,182.25) ;
\draw    (524.57,46.57) -- (524.57,187.43) ;
\draw    (552.67,46) -- (552.67,186.28) ;
\draw    (580.77,46) -- (580.77,186.28) ;
\draw [color={rgb, 255:red, 208; green, 2; blue, 27 }  ,draw opacity=1 ][line width=1.5]    (468.37,51.15) -- (468.37,79.65) ;
\draw [color={rgb, 255:red, 208; green, 2; blue, 27 }  ,draw opacity=1 ][line width=1.5]    (496.37,51.15) -- (496.37,79.65) ;
\draw [color={rgb, 255:red, 208; green, 2; blue, 27 }  ,draw opacity=1 ][line width=1.5]    (524.37,51.15) -- (524.37,79.65) ;
\draw [color={rgb, 255:red, 208; green, 2; blue, 27 }  ,draw opacity=1 ][line width=1.5]    (552.37,51.15) -- (552.37,79.65) ;
\draw [color={rgb, 255:red, 208; green, 2; blue, 27 }  ,draw opacity=1 ][line width=1.5]    (580.37,51.15) -- (580.37,79.65) ;
\draw [color={rgb, 255:red, 208; green, 2; blue, 27 }  ,draw opacity=1 ][line width=1.5]    (468.37,78.65) -- (496.37,78.65) ;
\draw [color={rgb, 255:red, 208; green, 2; blue, 27 }  ,draw opacity=1 ][line width=1.5]    (524.37,78.65) -- (552.37,78.65) ;
\draw [color={rgb, 255:red, 208; green, 2; blue, 27 }  ,draw opacity=1 ][line width=1.5]    (552.37,78.65) -- (580.37,78.65) ;
\draw [color={rgb, 255:red, 208; green, 2; blue, 27 }  ,draw opacity=1 ][line width=1.5]    (524.37,77.15) -- (524.37,105.65) ;
\draw [color={rgb, 255:red, 208; green, 2; blue, 27 }  ,draw opacity=1 ][line width=1.5]    (552.37,77.15) -- (552.37,105.65) ;
\draw [color={rgb, 255:red, 208; green, 2; blue, 27 }  ,draw opacity=1 ][line width=1.5]    (524.37,104.65) -- (552.37,104.65) ;
\draw [color={rgb, 255:red, 208; green, 2; blue, 27 }  ,draw opacity=1 ][line width=1.5]    (440.37,103.15) -- (440.37,131.65) ;
\draw [color={rgb, 255:red, 208; green, 2; blue, 27 }  ,draw opacity=1 ][line width=1.5]    (468.37,103.15) -- (468.37,131.65) ;
\draw [color={rgb, 255:red, 208; green, 2; blue, 27 }  ,draw opacity=1 ][line width=1.5]    (440.37,130.65) -- (468.37,130.65) ;
\draw [color={rgb, 255:red, 208; green, 2; blue, 27 }  ,draw opacity=1 ][line width=1.5]    (468.37,103.15) -- (468.37,131.65) ;
\draw [color={rgb, 255:red, 208; green, 2; blue, 27 }  ,draw opacity=1 ][line width=1.5]    (496.37,103.15) -- (496.37,131.65) ;
\draw [color={rgb, 255:red, 208; green, 2; blue, 27 }  ,draw opacity=1 ][line width=1.5]    (468.37,130.65) -- (496.37,130.65) ;
\draw [color={rgb, 255:red, 208; green, 2; blue, 27 }  ,draw opacity=1 ][line width=1.5]    (468.37,154.15) -- (468.37,182.65) ;
\draw [color={rgb, 255:red, 208; green, 2; blue, 27 }  ,draw opacity=1 ][line width=1.5]    (468.37,181.65) -- (496.37,181.65) ;
\draw [color={rgb, 255:red, 208; green, 2; blue, 27 }  ,draw opacity=1 ][line width=1.5]    (496.37,154.15) -- (496.37,182.65) ;
\draw [color={rgb, 255:red, 208; green, 2; blue, 27 }  ,draw opacity=1 ][line width=1.5]    (524.37,154.15) -- (524.37,182.65) ;
\draw [color={rgb, 255:red, 208; green, 2; blue, 27 }  ,draw opacity=1 ][line width=1.5]    (496.37,181.65) -- (524.37,181.65) ;
\draw [color={rgb, 255:red, 208; green, 2; blue, 27 }  ,draw opacity=1 ][line width=1.5]    (524.37,181.65) -- (552.37,181.65) ;
\draw [color={rgb, 255:red, 208; green, 2; blue, 27 }  ,draw opacity=1 ][line width=1.5]    (552.37,154.15) -- (552.37,182.65) ;
\draw [color={rgb, 255:red, 208; green, 2; blue, 27 }  ,draw opacity=1 ][line width=1.5]    (580.37,154.15) -- (580.37,182.65) ;
\draw [color={rgb, 255:red, 208; green, 2; blue, 27 }  ,draw opacity=1 ][line width=1.5]    (552.37,181.65) -- (580.37,181.65) ;
\draw [color={rgb, 255:red, 74; green, 144; blue, 226 }  ,draw opacity=1 ][line width=1.5]    (245.37,76.15) -- (245.37,104.65) ;
\draw [color={rgb, 255:red, 74; green, 144; blue, 226 }  ,draw opacity=1 ][line width=1.5]    (245.37,49.15) -- (245.37,77.65) ;
\draw [color={rgb, 255:red, 74; green, 144; blue, 226 }  ,draw opacity=1 ][line width=1.5]    (245.37,127.15) -- (245.37,155.65) ;
\draw [color={rgb, 255:red, 74; green, 144; blue, 226 }  ,draw opacity=1 ][line width=1.5]    (245.37,154.15) -- (245.37,182.65) ;
\draw [color={rgb, 255:red, 74; green, 144; blue, 226 }  ,draw opacity=1 ][line width=1.5]    (273.37,129.15) -- (273.37,157.65) ;
\draw [color={rgb, 255:red, 74; green, 144; blue, 226 }  ,draw opacity=1 ][line width=1.5]    (301.37,128.15) -- (301.37,156.65) ;
\draw [color={rgb, 255:red, 74; green, 144; blue, 226 }  ,draw opacity=1 ][line width=1.5]    (301.37,76.15) -- (301.37,104.65) ;
\draw [color={rgb, 255:red, 74; green, 144; blue, 226 }  ,draw opacity=1 ][line width=1.5]    (273.37,75.15) -- (273.37,103.65) ;
\draw [color={rgb, 255:red, 74; green, 144; blue, 226 }  ,draw opacity=1 ][line width=1.5]    (329.37,102.15) -- (329.37,130.65) ;
\draw [color={rgb, 255:red, 74; green, 144; blue, 226 }  ,draw opacity=1 ][line width=1.5]    (357.37,102.15) -- (357.37,130.65) ;
\draw [color={rgb, 255:red, 74; green, 144; blue, 226 }  ,draw opacity=1 ][line width=1.5]    (357.37,128.15) -- (357.37,156.65) ;
\draw [color={rgb, 255:red, 74; green, 144; blue, 226 }  ,draw opacity=1 ][line width=1.5]    (329.37,125.15) -- (329.37,153.65) ;
\draw [color={rgb, 255:red, 74; green, 144; blue, 226 }  ,draw opacity=1 ][line width=1.5]    (385.37,127.15) -- (385.37,155.65) ;
\draw [color={rgb, 255:red, 74; green, 144; blue, 226 }  ,draw opacity=1 ][line width=1.5]    (385.37,102.15) -- (385.37,130.65) ;
\draw [color={rgb, 255:red, 74; green, 144; blue, 226 }  ,draw opacity=1 ][line width=1.5]    (385.37,76.15) -- (385.37,104.65) ;
\draw [color={rgb, 255:red, 74; green, 144; blue, 226 }  ,draw opacity=1 ][line width=1.5]    (496.37,127.15) -- (496.37,155.65) ;
\draw (120,203) node    {$a)$};
\draw (320,205) node    {$b)$};
\draw (512,205) node    {$c)$};
\end{tikzpicture}}
%If not, use
%\vspace{5cm}       % Give the correct figure height in cm
%Give a unique label
\caption{a) A subset $\ell$ of $\mathbb{O}_{\Delta}$ and the corresponding
$G_{e}, e \in\ell$, b) the paths in $(\Delta,\mathbb{V}_{\Delta})$ connecting
the graphs $G_{e}, e \in\ell$, c) three polymers associated to $\ell.$}%
\end{figure}

\subsection{Reduction to a polymer gas model}

Given $e\in\mathbb{O},$ let
\begin{equation}
V^{\prime}\left(  e\right)  :=\left\{  x\in\mathbb{Z}\times\delta
\mathbb{Z}:\left(  x_{1},\delta\left(  x_{2}-1\right)  \right)  \in
V_{e}\right\}  \label{V'}%
\end{equation}
and
\begin{equation}
E^{\prime}\left(  e\right)  :=\left\{  e^{\prime}\in\mathbb{V}:e^{\prime
}=\left\{  x,y\right\}  ,x\in V_{e},y\in V^{\prime}\left(  e\right)  \right\}
\ .
\end{equation}
We set $G_{e}:=\left(  V_{e}\cup V^{\prime}\left(  e\right)  ,e\cup E^{\prime
}\left(  e\right)  \right)  \subset\mathbb{L}_{\delta}^{2}.$

We call \emph{polymer} a connected subgraph $R$ of $\mathbb{L}_{\delta}^{2}$
which satisfies the following conditions:

\begin{enumerate}
\item for any $e\in E\left(  R\right)  \cap\mathbb{O},G_{e}\subseteq R;$

\item if $e$ and $e^{\prime}$ are two distinct edges in $E\left(  R\right)
\cap\mathbb{O},$ either $G_{e}\cup G_{e^{\prime}}$ is a connected subgraph of
$\mathbb{L}_{\delta}^{2}$\ or, given a path $\gamma$ connecting $G_{e}$ and
$G_{e^{\prime}},$ for any $e^{\prime\prime}\in E\left(  \gamma\right)
\cap\mathbb{O},G_{e^{\prime\prime}}\subset R.$
\end{enumerate}

Given a polymer $R$ (an example is a connected subgraph of the graph in fig.2
c)) we set $\left\Vert R\right\Vert :=\left\vert E\left(  R\right)
\right\vert .$ Denoting by $\mathfrak{R}$ the set of polymers, $R,R^{\prime
}\in\mathfrak{R}$ are said to be \emph{compatible}, and we write $R\sim
R^{\prime},$ if $V\left(  R\right)  \cap V\left(  R^{\prime}\right)
=\varnothing,$ otherwise are said to be \emph{incompatible} and we write
$R\nsim R^{\prime}.$ Given $\mathcal{R}\subset\mathfrak{R},$ we denote by
$\mathfrak{P}\left(  \mathcal{R}\right)  $ the collection of the subsets of
$\mathcal{R}$ consisting of mutually compatible polymers and by $\mathfrak{P}%
_{0}\left(  \mathcal{R}\right)  :=\left\{  \varrho\in\mathfrak{P}\left(
\mathcal{R}\right)  :\left\Vert \varrho\right\Vert <\infty\right\}  .$ We also
set $\mathfrak{P}:=\mathfrak{P}\left(  \mathfrak{R}\right)  ,\mathfrak{P}%
_{0}:=\mathfrak{P}_{0}\left(  \mathfrak{R}\right)  .$ Given $\mathcal{R}%
\in\mathcal{P}_{f}\left(  \mathfrak{R}\right)  $ and $R\in\mathfrak{R}$ we
write $\mathcal{R}\nsim R$ if there exists $R^{\prime}\in\mathcal{R}$ such
that $R^{\prime}\nsim R.$ Moreover, we call $\mathcal{R}$ a \emph{polymer
cluster} if it cannot be decomposed as a union of $\mathcal{R}_{1}%
,\mathcal{R}_{2}\in\mathcal{P}_{f}\left(  \mathfrak{R}\right)  $ such that
every pair $R_{1}\in\mathcal{R}_{1},R_{2}\in\mathcal{R}_{2}$ is compatible. We
denote by $\mathcal{C}\left(  \mathcal{R}\right)  $ the collection of polymer
clusters in $\mathcal{R}$ and let $\mathcal{C}$ be the collection of polymer
clusters in $\mathfrak{R}.$

Given a finite $\Delta:=\Lambda\times I\subset\mathbb{Z}\times\delta
\mathbb{Z}$ we denote by
\begin{equation}
V_{\Delta}^{+}:=%
%TCIMACRO{\dbigcup \limits_{e\in\mathbb{O}\ :\ V_{e}\subset\partial^{+}\Delta
%}}%
%BeginExpansion
{\displaystyle\bigcup\limits_{e\in\mathbb{O}\ :\ V_{e}\subset\partial
^{+}\Delta}}
%EndExpansion
V^{\prime}\left(  e\right)  \label{V+}%
\end{equation}
and set $\mathfrak{R}_{\Delta}$ the collection of polymers $R\in\mathfrak{R}$
such that:

\begin{itemize}
\item $V\left(  R\right)  \subseteq\Delta\cup V_{\Delta}^{+};$

\item if $V\left(  R\right)  \cap\overline{\partial}\Delta\neq\varnothing$
then either $\partial^{+}\Delta$ or $\partial^{-}\Delta$ or $\overline
{\partial}\Delta=\partial^{+}\Delta\cup\partial^{-}\Delta$ are contained in
$V\left(  R\right)  .$
\end{itemize}

We also set $\mathfrak{P}_{\Delta}:=\mathfrak{P}\left(  \mathfrak{R}_{\Delta
}\right)  .$ Then, for any $\mathcal{R}\subseteq\mathfrak{R}_{\Delta},$ we
define
\begin{equation}
\mathcal{Z}\left(  \mathcal{R},\Phi^{h,\xi}\right)  :=\sum_{\varrho
\in\mathfrak{P}\left(  \mathcal{R}\right)  }\prod\limits_{R\in\varrho}%
\Phi^{h,\xi}\left(  R\right)  \ ,
\end{equation}
where the function $\mathfrak{R}\ni R\longmapsto\Phi^{h,\xi}\left(  R\right)
\in\mathbb{R}^{+}$ is the \emph{activity} of the polymer.

By (\ref{Zd2}),
\begin{align}
Z_{\Delta}\left(  \xi\right)   &  =\sum_{\ell\in\mathcal{P}\left(
\mathbb{O}_{\Delta}\right)  }2^{\left\vert V\left(  \ell\right)  \right\vert
}\int%
%TCIMACRO{\dbigotimes \limits_{x\in V\left(  \ell\right)  }}%
%BeginExpansion
{\displaystyle\bigotimes\limits_{x\in V\left(  \ell\right)  }}
%EndExpansion
\mu^{\delta}\left(  \text{d}\sigma_{x}\right)
%TCIMACRO{\dprod \limits_{\left\{  x,y\right\}  \in\mathbb{V}_{\Delta}\ :\ x\in
%V\left(  \ell\right)  ,y\in V^{\prime}\left(  \ell\right)  }}%
%BeginExpansion
{\displaystyle\prod\limits_{\left\{  x,y\right\}  \in\mathbb{V}_{\Delta
}\ :\ x\in V\left(  \ell\right)  ,y\in V^{\prime}\left(  \ell\right)  }}
%EndExpansion
e^{W_{1}\left(  \sigma_{x},\sigma_{y}\right)  }\times\\
&  \times\prod\limits_{e\in\ell}\mathbf{1}_{e}\left(  \left\{  x,y\right\}
\right)  \left(  e^{W\left(  \sigma_{x},\sigma_{y}\right)  }-1\right)
\times\nonumber\\
&  \times\sum_{\mathfrak{g\in}\mathcal{P}\left(  \Pi\left(  \ell\right)
\right)  }\left(  \frac{1}{2}\right)  ^{\left\vert \Pi\left(  \ell\right)
\backslash\mathfrak{g}\right\vert }\prod\limits_{\gamma\in\mathfrak{g}%
}\mathbf{1}_{end\left(  \gamma\right)  }\left(  \left\{  x,y\right\}  \right)
\sigma_{x}\sigma_{y}\frac{e^{-2h\delta\left\vert E\left(  \gamma\right)
\right\vert }}{2}\times\nonumber\\
&  \times\left[  \left(  1-\mathbf{1}_{\partial^{+}\Delta}\left(  x\right)
\right)  +\mathbf{1}_{\partial^{+}\Delta}\left(  x\right)  \mathbf{1}_{\xi
_{x}^{+}}\left(  \sigma_{x}\right)  \right]  \left[  \left(  1-\mathbf{1}%
_{\partial^{-}\Delta}\left(  y\right)  \right)  +\mathbf{1}_{\partial
^{-}\Delta}\left(  y\right)  \mathbf{1}_{\xi_{y}^{-}}\left(  \sigma
_{y}\right)  \right]  \ .\nonumber
\end{align}
Then, given $\ell\in\mathcal{P}\left(  \mathbb{O}_{\Delta}\right)  $ and
$\mathfrak{g\in}\mathcal{P}\left(  \Pi\left(  \ell\right)  \right)  ,$ the
components of $\varrho\left(  \ell\right)  :=\left(
%TCIMACRO{\dbigcup \limits_{e\in\overline{\ell}}}%
%BeginExpansion
{\displaystyle\bigcup\limits_{e\in\overline{\ell}}}
%EndExpansion
G_{e}\right)  \cup\left(
%TCIMACRO{\dbigcup \limits_{\gamma\in\mathfrak{g}}}%
%BeginExpansion
{\displaystyle\bigcup\limits_{\gamma\in\mathfrak{g}}}
%EndExpansion
\gamma\right)  ,$ with $\overline{\ell}:=\ell\cup\left\{  e\in\mathbb{O}%
\ :\ V_{e}\subset\overline{\partial}\Delta\right\}  ,$ fit the definition of
polymer, hence we can write
\begin{align}
Z_{\Delta}\left(  \xi\right)   &  =\sum_{\ell\in\mathcal{P}\left(
\mathbb{O}_{\Delta}\right)  }\sum_{\varrho\in\mathfrak{P}_{\ell}}%
%TCIMACRO{\dprod \limits_{R\in\varrho}}%
%BeginExpansion
{\displaystyle\prod\limits_{R\in\varrho}}
%EndExpansion
2^{\left\vert U_{R}\right\vert }\int\mu^{\delta}\left(  \text{d}\sigma_{U_{R}%
}\right)
%TCIMACRO{\dprod \limits_{\left\{  x,y\right\}  \in\mathbb{V}_{\Delta}\ :\ x\in
%U_{R},y\in U_{R}^{\prime}}}%
%BeginExpansion
{\displaystyle\prod\limits_{\left\{  x,y\right\}  \in\mathbb{V}_{\Delta
}\ :\ x\in U_{R},y\in U_{R}^{\prime}}}
%EndExpansion
e^{W_{1}\left(  \sigma_{x},\sigma_{y}\right)  }\times\\
&  \times\prod\limits_{e\in E\left(  R\right)  \cap\mathbb{O}_{\Delta}%
}\mathbf{1}_{e}\left(  \left\{  x,y\right\}  \right)  \left(  e^{W\left(
\sigma_{x},\sigma_{y}\right)  }-1\right)  \times\nonumber\\
&  \times\prod\limits_{\gamma\in\mathfrak{g}\left(  R\right)  }\mathbf{1}%
_{end\left(  \gamma\right)  }\left(  \left\{  x,y\right\}  \right)  \sigma
_{x}\sigma_{y}\frac{e^{-2h\delta\left\vert E\left(  \gamma\right)  \right\vert
}}{2}\prod\limits_{\gamma\in\Pi\left(  \ell\right)  \backslash\mathfrak{g}%
\left(  R\right)  }\mathbf{1}_{end\left(  \gamma\right)  }\left(  \left\{
x,y\right\}  \right)  \frac{1}{2}\times\nonumber\\
&  \times\left[  \left(  1-\mathbf{1}_{\partial^{+}\Delta}\left(  x\right)
\right)  +\mathbf{1}_{\partial^{+}\Delta}\left(  x\right)  \mathbf{1}_{\xi
_{x}^{+}}\left(  \sigma_{x}\right)  \right]  \left[  \left(  1-\mathbf{1}%
_{\partial^{-}\Delta}\left(  y\right)  \right)  +\mathbf{1}_{\partial
^{-}\Delta}\left(  y\right)  \mathbf{1}_{\xi_{y}^{-}}\left(  \sigma
_{y}\right)  \right]  \ ,\nonumber
\end{align}
where, for any $\ell\in\mathcal{P}\left(  \mathbb{O}_{\Delta}\right)
,\mathfrak{P}_{\ell}$ is the set of the collections of mutually compatible
polymers which can be realised as union set of $%
%TCIMACRO{\dbigcup \limits_{e\in\overline{\ell}}}%
%BeginExpansion
{\displaystyle\bigcup\limits_{e\in\overline{\ell}}}
%EndExpansion
G_{e}$ with elements of $\Pi\left(  \ell\right)  $ and, for any polymer $R$ in
$\varrho\in\mathfrak{P}_{\ell},U_{R}:=V\left(  \ell\right)  \cap V\left(
R\right)  ,U_{R}^{\prime}:=V^{\prime}\left(  \ell\right)  \cap V\left(
R\right)  $ and $\mathfrak{g}\left(  R\right)  :=\left\{  \gamma\in\Pi\left(
\ell\right)  :\gamma\subset R\right\}  .$

Given $\varrho\in\mathfrak{P}_{\Delta},$ setting $\ell\left(  \varrho\right)
:=E\left(  \varrho\right)  $ and consequently $V\left(  \varrho\right)
:=V\left(  \ell\left(  \rho\right)  \right)  $ and $\Pi\left(  \varrho\right)
:=\Pi\left(  \ell\left(  \rho\right)  \right)  ,Z_{\Delta}\left(  \xi\right)
$ can be rewritten as
\begin{align}
Z_{\Delta}\left(  \xi\right)   &  =\sum_{\varrho\in\mathfrak{P}_{\Delta}}%
\prod\limits_{R\in\varrho}2^{\left\vert U_{R}\right\vert }\int\mu^{\delta
}\left(  \text{d}\sigma_{U_{R}}\right)
%TCIMACRO{\dprod \limits_{\left\{  x,y\right\}  \in\mathbb{V}_{\Delta}\ :\ x\in
%U_{R},y\in U_{R}^{\prime}}}%
%BeginExpansion
{\displaystyle\prod\limits_{\left\{  x,y\right\}  \in\mathbb{V}_{\Delta
}\ :\ x\in U_{R},y\in U_{R}^{\prime}}}
%EndExpansion
e^{W_{1}\left(  \sigma_{x},\sigma_{y}\right)  }\times\\
&  \times\prod\limits_{e\in E\left(  R\right)  \cap\mathbb{O}_{\Delta}%
}\mathbf{1}_{e}\left(  \left\{  x,y\right\}  \right)  \left(  e^{W\left(
\sigma_{x},\sigma_{y}\right)  }-1\right)  \times\nonumber\\
&  \times\left(  \frac{1}{2}\right)  ^{\left\vert \Pi\left(  \varrho\right)
\backslash\mathfrak{g}\left(  R\right)  \right\vert }\prod\limits_{\gamma
\in\mathfrak{g}\left(  R\right)  }\mathbf{1}_{end\left(  \gamma\right)
}\left(  \left\{  x,y\right\}  \right)  \sigma_{x}\sigma_{y}\frac
{e^{-2h\delta\left\vert E\left(  \gamma\right)  \right\vert }}{2}%
\times\nonumber\\
&  \times\left[  \left(  1-\mathbf{1}_{\partial^{+}\Delta}\left(  x\right)
\right)  +\mathbf{1}_{\partial^{+}\Delta}\left(  x\right)  \mathbf{1}_{\xi
_{x}^{+}}\left(  \sigma_{x}\right)  \right]  \left[  \left(  1-\mathbf{1}%
_{\partial^{-}\Delta}\left(  y\right)  \right)  +\mathbf{1}_{\partial
^{-}\Delta}\left(  y\right)  \mathbf{1}_{\xi_{y}^{-}}\left(  \sigma
_{y}\right)  \right]  \ .\nonumber
\end{align}
Hence,
\begin{equation}
Z_{\Delta}\left(  \xi\right)  =\mathcal{Z}\left(  \mathfrak{R}_{\Delta}%
,\Phi^{h,\xi}\right)  =\sum_{\varrho\in\mathfrak{P}_{\Delta}}\prod
\limits_{R\in\varrho}\Phi^{h,\xi}\left(  R\right)  \ , \label{Z=Z}%
\end{equation}
with
\begin{align}
\Phi^{h,\xi}\left(  R\right)   &  :=2^{\left\vert U_{R}\right\vert }\int
\mu^{\delta}\left(  \text{d}\sigma_{U_{R}}\right)
%TCIMACRO{\dprod \limits_{\left\{  x,y\right\}  \in\mathbb{V}_{\Delta}\ :\ x\in
%U_{R},y\in U_{R}^{\prime}}}%
%BeginExpansion
{\displaystyle\prod\limits_{\left\{  x,y\right\}  \in\mathbb{V}_{\Delta
}\ :\ x\in U_{R},y\in U_{R}^{\prime}}}
%EndExpansion
e^{W_{1}\left(  \sigma_{x},\sigma_{y}\right)  }\times\label{FIR}\\
&  \times\prod\limits_{e\in E\left(  R\right)  \cap\mathbb{O}}\mathbf{1}%
_{e}\left(  \left\{  x,y\right\}  \right)  \left(  e^{W\left(  \sigma
_{x},\sigma_{y}\right)  }-1\right)  \times\nonumber\\
&  \times\left(  \frac{1}{2}\right)  ^{\left\vert \Pi\left(  \varrho\right)
\backslash\mathfrak{g}\left(  R\right)  \right\vert }\prod\limits_{\gamma
\in\mathfrak{g}\left(  R\right)  }\mathbf{1}_{end\left(  \gamma\right)
}\left(  \left\{  x,y\right\}  \right)  \sigma_{x}\sigma_{y}\frac
{e^{-2h\delta\left\vert E\left(  \gamma\right)  \right\vert }}{2}%
\times\nonumber\\
&  \times\left[  \left(  1-\mathbf{1}_{\partial^{+}\Delta}\left(  x\right)
\right)  +\mathbf{1}_{\partial^{+}\Delta}\left(  x\right)  \mathbf{1}_{\xi
_{x}^{+}}\left(  \sigma_{x}\right)  \right]  \left[  \left(  1-\mathbf{1}%
_{\partial^{-}\Delta}\left(  y\right)  \right)  +\mathbf{1}_{\partial
^{-}\Delta}\left(  y\right)  \mathbf{1}_{\xi_{y}^{-}}\left(  \sigma
_{y}\right)  \right]  \ .\nonumber
\end{align}

Choosing $\delta=\frac{1}{\sqrt{h}},$ for any $\xi\in\Omega_{\overline
{\partial}\Delta},R\in\mathfrak{R}_{\Delta},$ we have%

\begin{equation}
\Phi^{h,\xi}\left(  R\right)  \leq\left(  e^{\frac{J}{\sqrt{h}}}-1\right)
^{\left\vert E\left(  R\right)  \cap\mathbb{O}\right\vert }e^{-2\sqrt
{h}\left\vert E\left(  R\right)  \cap\mathbb{V}\right\vert }\leq e^{-a\left(
h\right)  \left\Vert R\right\Vert }\ ,
\end{equation}
with
\begin{equation}
e^{-a\left(  h\right)  }:=\max\left\{  \left(  e^{\frac{J}{\sqrt{h}}%
}-1\right)  ,e^{-2\sqrt{h}}\right\}  \ .
\end{equation}

We remark that to deal with b.c.'s that are free on the top and on the bottom
of $\Delta,$ or are periodic in the vertical direction, the definition of the
polymers activity must be changed slightly.

The previous bound implies that the cluster expansion is convergent when $h$
is sufficiently large. Indeed we can choose a constant $c>0$ such that
\begin{equation}
\sum_{R^{\prime}\in\mathfrak{R}_{\Delta}\ :\ R^{\prime}\nsim R}e^{c\left\Vert
R^{\prime}\right\Vert }e^{-a\left(  h\right)  \left\Vert R^{\prime}\right\Vert
}\leq\frac{c}{2}\left\Vert R\right\Vert \ , \label{conv}%
\end{equation}
which is a sufficient condition for the theorem in \cite{KP} to hold.
Therefore, for any $\mathcal{R}\subseteq\mathfrak{R}_{\Delta},$%
\begin{equation}
\log\mathcal{Z}\left(  \mathfrak{R}_{\Delta},\Phi^{h,\xi}\right)
=\sum_{\mathcal{R}^{\prime}\in\mathcal{C}\left(  \mathcal{R}\right)  }%
\hat{\Phi}^{h,\xi}\left(  \mathcal{R}^{\prime}\right)  \label{Z}%
\end{equation}
where, setting $\mathcal{C}_{\Delta}:=\mathcal{C}\left(  \mathfrak{R}_{\Delta
}\right)  ,$ in view of (\ref{conv}),%
\begin{equation}
\mathcal{C}_{\Delta}\ni\mathcal{R\longmapsto}\hat{\Phi}^{h,\xi}\left(
\mathcal{R}\right)  :=\sum_{\mathcal{R}^{\prime}\in\mathcal{P}\left(
\mathcal{R}\right)  }\left(  -1\right)  ^{\left\vert \mathcal{R}\right\vert
-\left\vert \mathcal{R}^{\prime}\right\vert }\log\mathcal{Z}\left(
\mathfrak{R}_{\Delta},\Phi^{h,\xi}\right)
\end{equation}
is such that, $\forall R\in\mathfrak{R}_{\Delta},$%
\begin{equation}
\sum_{\mathcal{R}^{\prime}\in\mathcal{C}_{\Delta}\ :\ \mathcal{R}^{\prime
}\nsim R}\left\vert \hat{\Phi}^{h,\xi}\left(  \mathcal{R}^{\prime}\right)
\right\vert e^{\frac{c}{2}\sum_{R^{\prime}\in\mathcal{R}^{\prime}}\left\Vert
R^{\prime}\right\Vert }\leq\frac{c}{2}\left\Vert R\right\Vert \ . \label{ext1}%
\end{equation}
As already highlighted in the proposition in \cite{KP}, the last bound is the
key ingredient to perform estimates of quantities which can be represented, in
the setup of a polymer gas model, as ratios of partition functions of the form
$\frac{\mathcal{Z}\left(  \mathcal{R},\Phi\right)  }{\mathcal{Z}\left(
\mathcal{R},\Phi^{\prime}\right)  }.$ As will clearly appear in the next
section the proof of Theorem \ref{t1} will indeed rely on estimates of this kind.

Moreover, the bound on the polymer activity also implies that two point
correlation functions decay exponentially with the distance when $h$ is large
with an $h$-dependent decay constant.

\section{Slit box variant of the model}

From now on we set $\delta$ equal to $\frac{1}{\sqrt{h}}.$ Let $\mathbb{H}%
:=\left\{  y\in\mathbb{Z}\times\delta\mathbb{Z}:y_{2}=0\right\}  $ and
$\mathbb{H}^{+}:=\left\{  y\in\mathbb{Z}\times\delta\mathbb{Z}:y_{2}%
>0\right\}  ,$\linebreak$\mathbb{H}^{-}:=\left\{  y\in\mathbb{Z}\times
\delta\mathbb{Z}:y_{2}<0\right\}  .$

Given a finite $\Lambda\subset\mathbb{Z},$ we denote by $\bar{\Lambda
}:=\left\{  \left(  x_{1},0\right)  \in\mathbb{H}:x_{1}\in\Lambda\right\}  $
and set
\begin{equation}
\Lambda_{\pm}:=\left\{  y\in\mathbb{R}^{2}:y=\left(  x_{1},\pm\frac{1}%
{2}\right)  ,x_{1}\in\Lambda\right\}  \ .
\end{equation}

In order to discuss the asymptotic scaling of the entanglement entropy of the
ground state of a block of spins, inspired by \cite{GOS}, we consider a
modified model in which $\mathbb{L}_{\delta}^{2}$ is replaced by the graph
$\mathbb{\bar{L}}_{\delta}^{2}$ in such a way that:

\begin{itemize}
\item each lattice point in $x=\left(  x_{1},0\right)  \in\bar{\Lambda}$ is
replaced by two distinct vertices $x^{+}:=\left(  x_{1},\frac{1}{2}\right)  $
and $x^{-}:=\left(  x_{1},-\frac{1}{2}\right)  ;$

\item each bond $e=\left\{  x,y\right\}  \in\mathbb{E}_{\delta}^{2}$ such that
$x\in\bar{\Lambda},y\in\mathbb{H}^{+}$ is replaced by $\left\{  x^{+}%
,y\right\}  ;$

\item each bond $e=\left\{  x,y\right\}  \in\mathbb{E}_{\delta}^{2}$ such that
$x\in\bar{\Lambda},y\in\mathbb{H}^{-}$ is replaced by $\left\{  x^{-}%
,y\right\}  ;$

\item each bond $e=\left\{  x,y\right\}  \in\mathbb{E}_{\delta}^{2}$ such that
$x,y\in\bar{\Lambda}$ is replaced by the bonds $\left\{  x^{+},y^{+}\right\}
,\left\{  x^{-},y^{-}\right\}  ;$

\item each bond $e=\left\{  x,y\right\}  \in\mathbb{E}_{\delta}^{2}$ such that
$y\in\bar{\Lambda},x\in\partial\bar{\Lambda}\cap\mathbb{H}$ is replaced by the
bonds $\left\{  x,y^{+}\right\}  ,\left\{  x,y^{-}\right\}  .$
\end{itemize}

For any $\beta>0,$ let us set $I^{+}:=\left[  0,\frac{\beta}{2}\right]
\cap\delta\mathbb{Z},I^{-}:=\left[  -\frac{\beta}{2},0\right]  \cap
\delta\mathbb{Z}$ and denote $I:=I^{+}\cup I^{-}.$ Moreover, we set
$\Delta:=\Delta_{+}\cup\Delta_{-},$ where $\Delta_{\pm}:=\Lambda_{\pm}\times
I^{\pm},$ and keep the definitions of $\partial^{\pm}\Delta,\overline
{\partial}\Delta$ and $\partial\Delta$ given in (\ref{b+-}) and (\ref{bDelta}%
). We also keep the definition of $V_{\Delta}^{+}$ given in (\ref{V+}) and
define $\mathbb{\bar{V}}_{\Delta}$ and $\mathbb{\bar{O}}_{\Delta}$ according
to the definitions of $\mathbb{V}_{\Delta}$ and $\mathbb{O}_{\Delta}$ given at
the beginning of Section \ref{Cluster}.

Then, the generic matrix element $\rho_{\Lambda}^{\beta}\left(  \epsilon
^{+},\epsilon^{-}\right)  ,$ with $\epsilon^{\pm}\in\Omega_{\Lambda},$ of the
density operator $\rho_{\Lambda}^{\beta}$ on $\mathcal{H}_{\Lambda}$
associated to the Hamiltonian (\ref{HtfI}) can be rewritten in terms of a
Gibbsian specification $\frac{\nu_{\delta}^{p}\left(  \text{d}\sigma_{\Delta
}\right)  }{\mu^{\delta}\left(  \text{d}\sigma_{\Delta}\right)  }$ for a spin
model defined on $\mathbb{\bar{L}}_{\delta}^{2}$ by a two-body potential
$W_{1}+W$ analogous to that given in (\ref{Gibbs2}) with periodic b.c.'s at
$\overline{\partial}\Delta.$

Let us set $R_{+}:=\bigcup\limits_{e\in\mathbb{\bar{O}}_{\Delta}%
\ :\ V_{e}\subset\Lambda_{+}}G_{e}$ and define $R_{-}$ to be the graph such
that $V\left(  R_{-}\right)  :=\Lambda_{-}\cup\bar{\Lambda}$ and $E\left(
R_{-}\right)  :=\left\{  e\in\mathbb{\bar{O}}_{\Delta}\ :\ V_{e}\subset
\Lambda_{-}\right\}  \cup\left\{  \left\{  x,y\right\}  \in\mathbb{R}^{2}%
:x\in\Lambda_{-},y\in\bar{\Lambda}\right\}  .$ We denote by $\mathfrak{R}$ the
union set of $\left\{  R_{+},R_{-}\right\}  $ with the collection of polymers
$R$ in $\mathbb{\bar{L}}_{\delta}^{2}$ such that $V\left(  R\right)
\subset\Lambda_{+}^{c}.$ Hence, assuming periodic b.c.'s at $\overline
{\partial}\Delta,$ fixed b.c. $\left(  \epsilon^{+},\epsilon^{-}\right)
\in\Omega_{\Lambda_{+}}\times\Omega_{\Lambda_{-}}$ and free b.c.'s at
$\partial\Delta,$ denoting by $\Phi_{\sigma_{\Lambda_{+}}=\epsilon^{+}%
,\sigma_{\Lambda_{-}}=\epsilon^{-}}^{h}$ the activity of the polymers in
\begin{equation}
\mathfrak{R}_{\Delta}:=\left\{  R\in\mathfrak{R}:V\left(  R\right)
\subseteq\Delta\cup\bar{\Lambda}\right\}  \ , \label{RD}%
\end{equation}
by (\ref{Z=Z}), we have
\begin{equation}
\rho_{\Lambda}^{\beta}\left(  \epsilon^{+},\epsilon^{-}\right)  =\frac
{\nu_{\delta}^{p}\left(  \mathbf{1}_{\left\{  \sigma_{\Lambda_{+}}%
=\epsilon^{+},\sigma_{\Lambda_{-}}=\epsilon^{-}\right\}  }\right)  }%
{\nu_{\delta}^{p}\left(  \mathbf{1}_{\left\{  \sigma_{\Lambda_{+}}%
=\sigma_{\Lambda_{-}}\right\}  }\right)  }=\frac{\mathcal{Z}\left(
\mathfrak{R}_{\Delta},\Phi_{\sigma_{\Lambda_{+}}=\epsilon^{+},\sigma
_{\Lambda_{-}}=\epsilon^{-}}^{h}\right)  }{\mathcal{Z}\left(  \mathfrak{R}%
_{\Delta},\Phi_{\sigma_{\Lambda_{+}}=\sigma_{\Lambda_{-}}}^{h}\right)  }%
\end{equation}
with
\begin{equation}
\mathcal{Z}\left(  \mathfrak{R}_{\Delta},\Phi_{\sigma_{\Lambda_{+}}%
=\sigma_{\Lambda_{-}}}^{h}\right)  =\sum_{\epsilon\in\Omega_{\Lambda}%
}\mathcal{Z}\left(  \mathfrak{R}_{\Delta},\Phi_{\sigma_{\Lambda_{+}}%
=\sigma_{\Lambda_{-}}=\epsilon}^{h}\right)  \ .
\end{equation}

\subsection{The reduced density operator}

From now on we will keep our notation as close as possible to that introduced
in \cite{GOS}. Let us set for $m,L\in\mathbb{N},$%
\begin{equation}
{\Lambda}_{m}{:=\{-m,-m+1,\ldots,m+L\}\ ,\ }\Lambda_{0}:=\left\{
0,..,L\right\}
\end{equation}
and $\Lambda_{m}^{\pm}:=\left(  \Lambda_{m}\right)  _{\pm},\Lambda_{\pm
}:=\left(  \Lambda_{0}\right)  _{\pm},\Delta_{m}:=\Delta_{m}^{+}\cup\Delta
_{m}^{-},$ where $\Delta_{m}^{\pm}:=\Lambda_{m}^{\pm}\times I^{\pm}.$
Considering the representation of $\mathcal{H}_{m}:=\mathcal{H}_{\Lambda_{m}}$
as $\mathcal{H}_{m,L}\otimes\mathcal{H}_{L},$ where $\mathcal{H}%
_{m,L}:=\mathcal{H}_{\Lambda_{m}\backslash\Lambda_{0}},\mathcal{H}%
_{L}:=\mathcal{H}_{\Lambda_{0}},$ we denote by $\bar{\rho}_{m}^{L,\beta}$ the
partial trace of $\rho_{m}^{\beta}:=\rho_{\Lambda_{m}}^{\beta}$ w.r.t.
$\mathcal{H}_{m,L}.$ Then, the generic matrix element $\rho_{m}^{L,\beta
}\left(  \epsilon^{+},\epsilon^{-}\right)  :=\rho_{\Lambda_{m}}^{\Lambda
_{0},\beta}\left(  \epsilon^{+},\epsilon^{-}\right)  ,$ with $\epsilon^{\pm
}\in\Omega_{L}:=\Omega_{\Lambda_{0}},$ of the \emph{reduced density operator}
$\rho_{m}^{L,\beta}:=\frac{\bar{\rho}_{m}^{L,\beta}}{tr_{\mathcal{H}%
_{\Lambda_{0}}}\bar{\rho}_{m}^{L,\beta}}$ on $\mathcal{H}_{L}$ writes
\begin{align}
\rho_{m}^{L,\beta}\left(  \epsilon^{+},\epsilon^{-}\right)   &  =\frac
{\nu_{\delta}^{p}\left(  \mathbf{1}_{\left\{  \sigma_{L^{+}}=\epsilon
^{+},\sigma_{L^{-}}=\epsilon^{-}\right\}  }\right)  }{\nu_{\delta}^{p}\left(
\mathbf{1}_{\left\{  \sigma_{L^{+}}=\sigma_{L^{-}}\right\}  }\right)  }\\
&  =\frac{\sum_{\epsilon^{\prime}\in\Omega_{m,L}}\mathcal{Z}\left(
\mathfrak{R}_{m},\Phi_{\sigma_{m,L}=\epsilon^{\prime},\sigma_{\Lambda_{+}%
}=\epsilon^{+},\sigma_{\Lambda_{-}}=\epsilon^{-}}^{h}\right)  }{\sum
_{\epsilon\in\Omega_{L}}\sum_{\epsilon^{\prime}\in\Omega_{m,L}}\mathcal{Z}%
\left(  \mathfrak{R}_{m},\Phi_{\sigma_{m,L}=\epsilon^{\prime},\sigma
_{\Lambda_{+}}=\sigma_{\Lambda_{-}}=\epsilon}^{h}\right)  }\nonumber\\
&  =\frac{\mathcal{Z}\left(  \mathfrak{R}_{m},\Phi_{\sigma_{\Lambda_{+}%
}=\epsilon^{+},\sigma_{\Lambda_{-}}=\epsilon^{-}}^{h}\right)  }{\mathcal{Z}%
\left(  \mathfrak{R}_{m},\Phi_{\sigma_{\Lambda_{+}}=\sigma_{\Lambda_{-}}}%
^{h}\right)  }\ ,\nonumber
\end{align}
where, $\sigma_{L^{\pm}}:=\sigma_{\Lambda_{\pm}},\sigma_{m,L}:=\sigma
_{\Lambda_{m}\backslash\Lambda_{0}},\Omega_{m,L}:=\Omega_{\Lambda
_{m}\backslash\Lambda_{0}}$ and $\mathfrak{R}_{m}:=\mathfrak{R}_{\Delta_{m}}$
with $\mathfrak{R}_{\Delta_{m}}$ defined as in (\ref{RD}).

Since in the limit of $\beta\rightarrow\infty$ and then of $\left\{
\Lambda_{m}\right\}  \uparrow\mathbb{Z}$ the Gibbs measures defined in
(\ref{Gibbs1}) for different b.c.'s converge weakly to the same limit, the
same conclusion holds for the Gibbs measures defined in (\ref{Gibbs2}).
Therefore, assuming b.c. $\xi\in\Omega_{\partial^{+}\Delta_{m}\cup\partial
^{-}\Delta_{m}}$ at $\overline{\partial}\Delta_{m}$ and free b.c.'s at
$\partial\Delta_{m}\backslash\overline{\partial}\Delta_{m}$ for any event
$\left\{  \cdot\right\}  \in\mathfrak{S}_{\Lambda_{+}\cup\Lambda_{-}},$ we
set
\begin{equation}
\phi_{m,\beta}\left\{  \cdot\right\}  :=\frac{\nu_{\delta}^{\xi}\left(
\mathbf{1}_{\left\{  \cdot\right\}  }\right)  }{\nu_{\delta}^{\xi}\left(
\mathbf{1}_{\Omega_{\Lambda_{+}\cup\Lambda_{-}}}\right)  }%
\end{equation}
and define
\begin{align}
\tilde{\rho}_{m}^{L,\beta}\left(  \epsilon^{+},\epsilon^{-}\right)   &
:=\frac{\phi_{m,\beta}\left\{  \sigma_{L^{+}}=\epsilon^{+},\sigma_{L^{-}%
}=\epsilon^{-}\right\}  }{\phi_{m,\beta}\left\{  \sigma_{L^{+}}=\sigma_{L^{-}%
}\right\}  }\\
&  =\frac{\mathcal{Z}\left(  \mathfrak{R}_{m},\Phi_{\sigma_{\Lambda_{+}%
}=\epsilon^{+},\sigma_{\Lambda_{-}}=\epsilon^{-}}^{h,\xi}\right)
}{\mathcal{Z}\left(  \mathfrak{R}_{m},\Phi_{\sigma_{\Lambda_{+}}%
=\sigma_{\Lambda_{-}}}^{h,\xi}\right)  }\ .\nonumber
\end{align}
where here $\mathfrak{R}_{m}$ is the set of polymers $R$ in $\mathfrak{R}$
with $V\left(  R\right)  \subseteq\Delta_{m}\cup\bar{\Lambda}\cup
V_{\Delta_{m}}^{+}$ such that, if $V\left(  R\right)  \cap\overline{\partial
}\Delta\neq\varnothing,$ then $V\left(  R\right)  $ contains either
$\partial^{+}\Delta$ or $\partial^{-}\Delta$ or both. To simplify the
notation, in the following we will also set $\mathfrak{P}_{m}:=\mathfrak{P}%
_{\Delta_{m}}$ and $\mathcal{C}_{m}:=\mathcal{C}_{\Delta_{m}}.$

\begin{lemma}
There exists a positive value of the external magnetic field $h^{\ast}$ such
that, for any $h>h^{\ast}$ and $L,m\in\mathbb{N},$ uniformly in $\beta
>0,\epsilon^{+},\epsilon^{-}\in\Omega_{L}$ and in the b.c. $\xi\in
\Omega_{\overline{\partial}\Delta_{m}},$%
\begin{equation}
e^{-\psi\left(  c\right)  }\leq\frac{\phi_{m,\beta}\left\{  \sigma_{L^{+}%
}=\epsilon^{+},\sigma_{L^{-}}=\epsilon^{-}\right\}  }{\phi_{m,\beta}\left\{
\sigma_{L^{+}}=\epsilon^{+}\right\}  \phi_{m,\beta}\left\{  \sigma_{L^{-}%
}=\epsilon^{-}\right\}  }\leq e^{\psi\left(  c\right)  }\ , \label{L1}%
\end{equation}
where $\psi\left(  c\right)  :=\frac{8}{1-e^{-\frac{c}{2}}}$ with $c$ the
constant appearing in (\ref{ext1}).
\end{lemma}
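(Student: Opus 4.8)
The plan is to pass to the polymer‑gas representation, expand the logarithm of the quotient in (\ref{L1}) by means of the convergent cluster expansion (\ref{Z}), and use the slit geometry to show that only clusters \emph{bridging} $\Lambda_+$ and $\Lambda_-$ contribute, their total weight being bounded uniformly in $m,L,\beta$. First I would record that, exactly as in the displayed formula for $\tilde{\rho}_{m}^{L,\beta}$, each of the three factors in (\ref{L1}) is a ratio of polymer partition functions with common denominator $\mathcal{Z}(\mathfrak{R}_{m},\Phi^{h,\xi})$. Abbreviating $\Phi^{h,\xi}_{+,-}:=\Phi^{h,\xi}_{\sigma_{\Lambda_{+}}=\epsilon^{+},\sigma_{\Lambda_{-}}=\epsilon^{-}}$ and writing $\Phi^{h,\xi}_{+}$, $\Phi^{h,\xi}_{-}$ for the activities with only the spins on $\Lambda_{+}$, resp. on $\Lambda_{-}$, frozen, one has $\phi_{m,\beta}\{\sigma_{L^{+}}=\epsilon^{+},\sigma_{L^{-}}=\epsilon^{-}\}=\mathcal{Z}(\mathfrak{R}_{m},\Phi^{h,\xi}_{+,-})/\mathcal{Z}(\mathfrak{R}_{m},\Phi^{h,\xi})$ and similarly for the two marginals. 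Freezing the spins on $\Lambda_{\pm}$ only inserts indicator factors into (\ref{FIR}) for those polymers whose support meets $\Lambda_{\pm}$, so each of the four activities still obeys $\Phi(R)\le e^{-a(h)\|R\|}$; hence (\ref{conv}), and therefore (\ref{ext1}), hold verbatim for all of them.

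Next I would apply (\ref{Z}) to all four partition functions and subtract, obtaining
\[
\log\frac{\phi_{m,\beta}\{\sigma_{L^{+}}=\epsilon^{+},\sigma_{L^{-}}=\epsilon^{-}\}}{\phi_{m,\beta}\{\sigma_{L^{+}}=\epsilon^{+}\}\,\phi_{m,\beta}\{\sigma_{L^{-}}=\epsilon^{-}\}}=\sum_{\mathcal{R}'\in\mathcal{C}_{m}}D(\mathcal{R}'),\quad D(\mathcal{R}'):=\hat{\Phi}^{h,\xi}_{+,-}(\mathcal{R}')-\hat{\Phi}^{h,\xi}_{+}(\mathcal{R}')-\hat{\Phi}^{h,\xi}_{-}(\mathcal{R}')+\hat{\Phi}^{h,\xi}(\mathcal{R}').
\]
The algebraic heart of the argument is that $D(\mathcal{R}')=0$ unless the support $\bigcup_{R\in\mathcal{R}'}V(R)$ meets both $\Lambda_{+}$ and $\Lambda_{-}$: if the cluster avoids $\Lambda_{+}$ then the freezing on $\Lambda_{+}$ is inert on every polymer of $\mathcal{R}'$, so $\hat{\Phi}^{h,\xi}_{+,-}(\mathcal{R}')=\hat{\Phi}^{h,\xi}_{-}(\mathcal{R}')$ and $\hat{\Phi}^{h,\xi}_{+}(\mathcal{R}')=\hat{\Phi}^{h,\xi}(\mathcal{R}')$ and the four terms cancel, and symmetrically if it avoids $\Lambda_{-}$.

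It then remains to bound $\sum_{\mathcal{R}'}|D(\mathcal{R}')|$ over the bridging clusters, and here the slit is decisive. Since $\mathbb{\bar{L}}_{\delta}^{2}$ has no vertical link joining $\Lambda_{+}$ to $\Lambda_{-}$ above $\Lambda_{0}$, any connected cluster meeting both $\Lambda_{0}\times\{1/2\}$ and $\Lambda_{0}\times\{-1/2\}$ must detour around one of the two ends of the slit, i.e. its support must contain a height‑$0$ vertex whose horizontal coordinate lies outside $\{0,\dots,L\}$; reaching such a crossing at horizontal distance $k$ from $\Lambda_{0}$ and returning forces $\sum_{R'\in\mathcal{R}'}\|R'\|\ge k$. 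I would bound $|D(\mathcal{R}')|\le|\hat{\Phi}^{h,\xi}_{+,-}(\mathcal{R}')|+|\hat{\Phi}^{h,\xi}_{+}(\mathcal{R}')|+|\hat{\Phi}^{h,\xi}_{-}(\mathcal{R}')|+|\hat{\Phi}^{h,\xi}(\mathcal{R}')|$, and for each of the four terms split the bridging clusters by side (left/right end) and by crossing coordinate. Anchoring (\ref{ext1}) at a minimal crossing polymer and absorbing, against the weight $e^{\frac{c}{2}\sum\|R'\|}$ already present in (\ref{ext1}), the factor coming from $\sum\|R'\|\ge k$, turns the sum over crossing positions into a geometric series $\sum_{k\ge 0}e^{-ck/2}=(1-e^{-c/2})^{-1}$. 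Collecting the two ends and the four terms produces $\sum_{\mathcal{R}'}|D(\mathcal{R}')|\le 8/(1-e^{-c/2})=\psi(c)$, which is exactly (\ref{L1}).

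I expect the real obstacle to be precisely this last step: making the bridging/summation estimate uniform in $L$ and $m$. There are order $L$ possible attachment sites on $\Lambda_{+}$ and unboundedly many crossing positions, so a uniform constant is available only because the slit imposes a minimal detour cost and the polymer activities decay exponentially in $\|R\|$ through $a(h)$ — equivalently through the summable weight in (\ref{ext1}). Pinning down the geometric statement ``bridging forces a detour of length $\ge k$ around an end,'' and verifying that the frozen‑spin activities genuinely inherit (\ref{ext1}), are the two points requiring care; the independence of $\beta$ is then automatic, since none of these bounds involves $\beta$.
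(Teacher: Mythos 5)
Your proposal is correct and follows essentially the same route as the paper: the paper likewise reduces each probability to a ratio of polymer partition functions with common denominator $\mathcal{Z}(\mathfrak{R}_{m},\Phi^{h,\xi})$, observes that the activities are unchanged on polymers compatible with $R_{\pm}$ so that the logarithm of the quotient collapses to the alternating sum $\hat{\Phi}_{+,-}-\hat{\Phi}_{+}-\hat{\Phi}_{-}+\hat{\Phi}$ over clusters incompatible with both $R_{+}$ and $R_{-}$, and then controls these bridging clusters by anchoring (\ref{ext1}) at the sites of $\Lambda_{+}$ and exploiting the minimal detour cost $k\wedge(L-k)$ around an end of the slit, yielding the geometric series $\sum_{k=0}^{L}e^{-\frac{c}{2}k\wedge(L-k)}\le 2/(1-e^{-c/2})$ and hence $\psi(c)$. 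Your bookkeeping of the factor $8$ (four terms times two ends) differs slightly from the paper's (a factor $4$ from the four terms times the two-sided geometric sum), but the estimate and its uniformity in $m,L,\beta$ are obtained in the same way.
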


\begin{proof}
We proceed as in the proof of the proposition in \cite{KP}. Since by
(\ref{FIR}) for any $R\in\mathfrak{R}_{m}$ compatible with $R_{+}$ and
$R_{-},\Phi_{\sigma_{\Lambda_{+}}=\epsilon^{+},\sigma_{\Lambda_{-}}%
=\epsilon^{-}}^{h,\xi}\left(  R\right)  =\Phi^{h,\xi}\left(  R\right)  ,$ by
(\ref{Z}) we have
\begin{align}
\phi_{m,\beta}\left\{  \sigma_{L^{+}}=\epsilon^{+},\sigma_{L^{-}}=\epsilon
^{-}\right\}   &  =\frac{\mathcal{Z}\left(  \mathfrak{R}_{m},\Phi
_{\sigma_{\Lambda_{+}}=\epsilon^{+},\sigma_{\Lambda_{-}}=\epsilon^{-}}^{h,\xi
}\right)  }{\mathcal{Z}\left(  \mathfrak{R}_{m},\Phi^{h,\xi}\right)  }\\
&  =\exp\left\{  \sum_{\mathcal{R}\in\mathcal{C}_{m}\ :\ \mathcal{R}\nsim
R_{+}\cup R_{-}}\hat{\Phi}_{\sigma_{L^{+}}=\epsilon^{+},\sigma_{L^{-}%
}=\epsilon^{-}}^{h,\xi}\left(  \mathcal{R}\right)  -\hat{\Phi}^{h,\xi}\left(
\mathcal{R}\right)  \right\}  \ .\nonumber
\end{align}
Moreover, because if $R\in\mathfrak{R}_{m}$ is compatible with $R_{+}%
,\Phi_{\sigma_{L^{+}}=\epsilon^{+}}^{h,\xi}\left(  R\right)  =\Phi^{h,\xi
}\left(  R\right)  $ and analogously if $R\in\mathfrak{R}_{m}$ is compatible
with $R_{-},\Phi_{\sigma_{L^{-}}=\epsilon^{-}}^{h,\xi}\left(  R\right)
=\Phi^{h,\xi}\left(  R\right)  ,$%
\begin{align}
\phi_{m,\beta}\left\{  \sigma_{L^{\pm}}=\epsilon^{\pm}\right\}   &
=\frac{\sum_{\epsilon^{\mp}\in\Omega_{L}}\mathcal{Z}\left(  \mathfrak{R}%
_{m},\Phi_{\sigma_{L^{\pm}}=\epsilon^{\pm},\sigma_{L^{\mp}}=\epsilon^{\mp}%
}^{h,\xi}\right)  }{\mathcal{Z}\left(  \mathfrak{R}_{m},\Phi^{h,\xi}\right)
}\\
&  =\frac{\mathcal{Z}\left(  \mathfrak{R}_{m},\Phi_{\sigma_{L^{\pm}}%
=\epsilon^{\pm}}^{h,\xi}\right)  }{\mathcal{Z}\left(  \mathfrak{R}_{m}%
,\Phi^{h,\xi}\right)  }\nonumber\\
&  =\exp\left\{  \sum_{\mathcal{R}\in\mathcal{C}_{m}\ :\ \mathcal{R}\nsim
R_{\pm}}\hat{\Phi}_{\sigma_{L^{\pm}}=\epsilon^{\pm}}^{h,\xi}\left(
\mathcal{R}\right)  -\hat{\Phi}^{h,\xi}\left(  \mathcal{R}\right)  \right\}
\ .\nonumber
\end{align}
Therefore, setting, for any $\mathcal{R}\in\mathcal{C}_{m},V\left(
\mathcal{R}\right)  :=\bigcup\limits_{R\in\mathcal{R}}V\left(  R\right)  $
and
\begin{align}
\bar{\Phi}_{\sigma_{L^{+}}=\epsilon^{+},\sigma_{L^{-}}=\epsilon^{-}}^{h,\xi
}\left(  \mathcal{R}\right)   &  :=\max\left\{  \left\vert \hat{\Phi}%
_{\sigma_{L^{+}}=\epsilon^{+},\sigma_{L^{-}}=\epsilon^{-}}^{h,\xi}\left(
\mathcal{R}\right)  \right\vert ,\left\vert \hat{\Phi}^{h,\xi}\left(
\mathcal{R}\right)  \right\vert ,\right.  \\
&  \left.  \left\vert \hat{\Phi}_{\sigma_{L^{+}}=\epsilon^{+},\sigma_{L^{-}%
}=\epsilon^{-}}^{h,\xi}\left(  \mathcal{R}\right)  \right\vert ,\left\vert
\hat{\Phi}_{\sigma_{L^{+}}=\epsilon^{+},\sigma_{L^{-}}=\epsilon^{-}}^{h,\xi
}\left(  \mathcal{R}\right)  \right\vert \right\}  \ ,\nonumber
\end{align}
by (\ref{ext1}),
\begin{gather}
\frac{\phi_{m,\beta}\left\{  \sigma_{L^{+}}=\epsilon^{+},\sigma_{L^{-}%
}=\epsilon^{-}\right\}  }{\phi_{m,\beta}\left\{  \sigma_{L^{+}}=\epsilon
^{+}\right\}  \phi_{m,\beta}\left\{  \sigma_{L^{-}}=\epsilon^{-}\right\}  }=\\
\exp\left\{  \sum_{\mathcal{R}\in\mathcal{C}_{m}\ :\ \mathcal{R}\nsim
R_{+},\mathcal{R}\nsim R_{-}}\hat{\Phi}_{\sigma_{L^{+}}=\epsilon^{+}%
,\sigma_{L^{-}}=\epsilon^{-}}^{h,\xi}\left(  \mathcal{R}\right)  -\hat{\Phi
}_{\sigma_{L^{+}}=\epsilon^{+}}^{h,\xi}\left(  \mathcal{R}\right)  -\hat{\Phi
}_{\sigma_{L^{+}}=\epsilon^{-}}^{h,\xi}\left(  \mathcal{R}\right)  +\hat{\Phi
}^{h,\xi}\left(  \mathcal{R}\right)  \right\}  \nonumber\\
\leq\exp\left\{  4\sum_{\mathcal{R}\in\mathcal{C}_{m}\ :\ \mathcal{R}\nsim
R_{+},\mathcal{R}\nsim R_{-}}\bar{\Phi}_{\sigma_{L^{+}}=\epsilon^{+}%
,\sigma_{_{L^{-}}}=\epsilon^{-}}^{h,\xi}\left(  \mathcal{R}\right)  \right\}
\nonumber\\
\leq\exp\left\{  4\sum_{x\in\Lambda_{+}}e^{-\frac{c}{2}\inf_{\mathcal{R}%
\in\mathcal{C}_{m}\ :\ \mathcal{R}\nsim R_{-},V\left(  \mathcal{R}\right)  \ni
x}\sum_{R\in\mathcal{R}}\left\Vert R\right\Vert }\times\right.  \nonumber\\
\left.  \times\sum_{\mathcal{R}\in\mathcal{C}_{m}\ :\ \mathcal{R}\nsim
R_{-},V\left(  \mathcal{R}\right)  \ni x}\bar{\Phi}_{\sigma_{L^{+}}%
=\epsilon^{+},\sigma_{_{L^{-}}}=\epsilon^{-}}^{h,\xi}\left(  \mathcal{R}%
\right)  e^{\frac{c}{2}\sum_{R\in\mathcal{R}}\left\Vert R\right\Vert
}\right\}  \nonumber\\
\leq\exp\left\{  2c\left\Vert R_{-}\right\Vert e^{-\frac{c}{2}\left\Vert
R_{-}\right\Vert }\sum_{k=0}^{L}e^{-\frac{c}{2}k\wedge\left(  L-k\right)
}\right\}  \leq e^{\frac{8}{1-e^{-\frac{c}{2}}}}\ .\nonumber
\end{gather}
The lower bound in (\ref{L1}) follows from the estimate
\begin{equation}
\frac{\phi_{m,\beta}\left\{  \sigma_{L^{+}}=\epsilon^{+},\sigma_{L^{-}%
}=\epsilon^{-}\right\}  }{\phi_{m,\beta}\left\{  \sigma_{L^{+}}=\epsilon
^{+}\right\}  \phi_{m,\beta}\left\{  \sigma_{L^{-}}=\epsilon^{-}\right\}
}\geq\exp-4\left\{  \sum_{\mathcal{R}\in\mathcal{C}_{\Delta_{m}}%
\ :\ \mathcal{R}\nsim R_{+},\mathcal{R}\nsim R_{-}}\bar{\Phi}_{\sigma
_{\sigma_{L^{+}}=\epsilon^{+},\sigma_{_{L^{-}}}=\epsilon^{-}}}^{h,\xi}\left(
\mathcal{R}\right)  \right\}  \ .
\end{equation}

\end{proof}

\begin{lemma}
There exists a positive value of the external magnetic field $h^{\ast}$ such
that, for any $h>h^{\ast}$ and any $L,m\in\mathbb{N},$ uniformly in
$\beta>0,\epsilon^{+},\epsilon^{-}\in\Omega_{L}$ and in the b.c.'s $\xi
\in\Omega_{\overline{\partial}\Delta_{n}}$ and $\eta_{n}\in\mathcal{S}%
_{\Delta_{n}^{c}\backslash\overline{\partial}\Delta_{n}},$%
\begin{equation}
e^{-\frac{\psi\left(  c\right)  }{2}e^{-cm}}\leq\frac{\phi_{n,\beta}\left\{
\sigma_{L^{+}}=\epsilon^{+},\sigma_{L^{-}}=\epsilon^{-}\right\}  }%
{\phi_{m,\beta}\left\{  \sigma_{L^{+}}=\epsilon^{+},\sigma_{L^{-}}%
=\epsilon^{-}\right\}  }\leq e^{\frac{\psi\left(  c\right)  }{2}e^{-cm}}\ .
\label{L2}%
\end{equation}

\end{lemma}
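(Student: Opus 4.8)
The plan is to reproduce the scheme of the previous lemma, now comparing the two cluster expansions attached to the boxes $\Delta_n$ and $\Delta_m$ instead of comparing a joint law with the product of its marginals (writing $\mathcal{C}_n:=\mathcal{C}_{\Delta_n}$). First I would record, exactly as above, the cluster representation of each factor: using (\ref{Z}) together with the fact, already exploited, that by (\ref{FIR}) any polymer $R$ compatible with both $R_+$ and $R_-$ satisfies $\Phi^{h,\xi}_{\sigma_{\Lambda_+}=\epsilon^+,\sigma_{\Lambda_-}=\epsilon^-}(R)=\Phi^{h,\xi}(R)$, I write
\[
\log\phi_{k,\beta}\left\{\sigma_{L^+}=\epsilon^+,\sigma_{L^-}=\epsilon^-\right\}=\sum_{\mathcal{R}\in\mathcal{C}_k\ :\ \mathcal{R}\nsim R_+\cup R_-}\left[\hat{\Phi}^{h,\xi}_{\sigma_{L^+}=\epsilon^+,\sigma_{L^-}=\epsilon^-}(\mathcal{R})-\hat{\Phi}^{h,\xi}(\mathcal{R})\right]
\]
for $k=m$ and $k=n$, and subtract the two identities.

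Second, I would observe that the two cluster sums differ only through clusters that feel the enlargement of the box. Since $\Delta_m\subset\Delta_n$ and, by (\ref{FIR}), the activity of a polymer depends on the boundary data $\xi,\eta_n$ only when the polymer touches $\overline{\partial}\Delta_n$ or the region $\Delta_n\setminus\Delta_m$, the weights $\hat{\Phi}$ attached to a cluster $\mathcal{R}$ entirely contained in $\Delta_m$ coincide in the two expansions and cancel in the difference. Hence the logarithm of the ratio in (\ref{L2}) reduces to the same alternating combination restricted to clusters $\mathcal{R}\in\mathcal{C}_n$ that are incompatible with $R_+\cup R_-$ and also contain a polymer reaching $\Delta_n\setminus\Delta_m$.

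Third comes the geometric localisation, which supplies the factor $e^{-cm}$. Such a cluster is anchored in $V(R_+\cup R_-)\subset\bar{\Lambda}\cup\Lambda_+\cup\Lambda_-$, hence at a horizontal coordinate in $\{0,\dots,L\}$, while it must also reach a horizontal coordinate outside $\{-m,\dots,m+L\}$; being connected through shared vertices it must therefore contain a horizontally spanning chain of horizontal extent at least $m$. Because condition~1 in the definition of a polymer forces the whole $G_e$ (one horizontal and two vertical edges) whenever an edge $e\in\mathbb{O}$ occurs, the total size $\sum_{R\in\mathcal{R}}\Vert R\Vert$ of any contributing cluster is at least $2m$. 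I would then split each summand as $\bar{\Phi}(\mathcal{R})\,e^{\frac{c}{2}\sum_{R}\Vert R\Vert}\cdot e^{-\frac{c}{2}\sum_{R}\Vert R\Vert}$, bound the last factor by $e^{-cm}$, sum the first factor over clusters anchored at a fixed $x$ by means of (\ref{ext1}) exactly as in the previous lemma, and finally sum over the $O(L)$ anchor points $x$ and over the resulting geometric series. This should reproduce the chain of inequalities of the previous proof with the extra factor $e^{-cm}$ pulled out, leaving the constant $\tfrac{\psi(c)}{2}=\tfrac{4}{1-e^{-c/2}}$; the lower bound is obtained identically.

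I expect the main obstacle to be precisely this geometric step: verifying rigorously that every contributing cluster carries total size at least $2m$ (and not merely $m$), which is what fixes the correct constant $c$ in the exponent $e^{-cm}$ and requires a careful accounting of how the forced vertical edges of the $G_e$'s accumulate along a spanning chain, together with the clean check that the $\hat{\Phi}$ weights genuinely cancel on the clusters contained in $\Delta_m$, uniformly in the boundary data $\xi$ and $\eta_n$.
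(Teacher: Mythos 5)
Your proposal follows essentially the same route as the paper: write each $\log\phi_{k,\beta}$ as a cluster sum via (\ref{Z}), use $\mathfrak{R}_m\subset\mathfrak{R}_n$ to cancel the clusters not meeting $\Delta_n\bigtriangleup\Delta_m$, and then extract the factor $e^{-cm}$ from the minimal size of a cluster that is incompatible with $R_+\cup R_-$ yet reaches outside $\Delta_m$, controlling the remaining sum by the Koteck\'{y}--Preiss bound (\ref{ext1}) and the sum over anchor points, which yields exactly the constant $\psi(c)/2=\frac{4}{1-e^{-c/2}}$. The geometric step you single out as the main obstacle is precisely the one the paper uses (its displayed bound $e^{-c(\|R_-\|/2+m)}$ encodes the extra size $2m$ of a spanning chain of $G_e$'s), so no new idea is needed.
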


\begin{proof}
Let us assume b.c. $\xi\in\Omega_{\partial^{+}\Delta_{n}\cup\partial^{-}%
\Delta_{n}}$ at $\overline{\partial}\Delta_{n}$ and free b.c.'s at
$\partial\Delta_{n}\backslash\overline{\partial}\Delta_{n}.$ The proof of
(\ref{L2}) for more general b.c.'s will follow directly from the one carried
out for this case since a change in the b.c.'s affects only the definition of
the polymers activity in the cluster expansion. As in the proof of the
preceding Lemma we have
\begin{gather}
\frac{\phi_{n,\beta}\left\{  \sigma_{L^{+}}=\epsilon^{+},\sigma_{L^{-}%
}=\epsilon^{-}\right\}  }{\phi_{m,\beta}\left\{  \sigma_{L^{+}}=\epsilon
^{+},\sigma_{L^{-}}=\epsilon^{-}\right\}  }=\\
\frac{\exp\left\{  \sum_{\mathcal{R}\in\mathcal{C}_{n}\ :\ \mathcal{R}%
\nsim\left(  R_{+}\cup R_{-}\right)  }\hat{\Phi}_{\sigma_{L^{+}}=\epsilon
^{+},\sigma_{L^{-}}=\epsilon^{-}}^{h,\xi}\left(  \mathcal{R}\right)
-\hat{\Phi}^{h,\xi}\left(  \mathcal{R}\right)  \right\}  }{\exp\left\{
\sum_{\mathcal{R}\in\mathcal{C}_{m}\ :\ \mathcal{R}\nsim\left(  R_{+}\cup
R_{-}\right)  }\hat{\Phi}_{\sigma_{L^{+}}=\epsilon^{+},\sigma_{L^{-}}%
=\epsilon^{-}}^{h,\xi}\left(  \mathcal{R}\right)  -\hat{\Phi}^{h,\xi}\left(
\mathcal{R}\right)  \right\}  }\ .\nonumber
\end{gather}
Since $\mathfrak{R}_{n}\supset\mathfrak{R}_{m},$ setting
\begin{equation}
\bar{\Phi}_{\sigma_{L^{+}}=\epsilon^{+},\sigma_{L^{-}}=\epsilon^{-}}^{h,\xi
}\left(  \mathcal{R}\right)  :=\max\left\{  \left\vert \hat{\Phi}%
_{\sigma_{L^{+}}=\epsilon^{+},\sigma_{L^{-}}=\epsilon^{-}}^{h,\xi}\left(
\mathcal{R}\right)  \right\vert ,\left\vert \hat{\Phi}^{h,\xi}\left(
\mathcal{R}\right)  \right\vert \right\}  \ ,
\end{equation}
we have
\begin{align}
\frac{\phi_{n,\beta}\left\{  \sigma_{L^{+}}=\epsilon^{+},\sigma_{L^{-}%
}=\epsilon^{-}\right\}  }{\phi_{m,\beta}\left\{  \sigma_{L^{+}}=\epsilon
^{+},\sigma_{L^{-}}=\epsilon^{-}\right\}  }  &  \leq\exp\left\{
2\sum_{\mathcal{R}\in\mathcal{C}_{n}\ :\ V\left(  \mathcal{R}\right)
\cap\left(  \Delta_{n}\bigtriangleup\Delta_{m}\right)  \neq\varnothing
,\mathcal{R}\nsim R_{+},\mathcal{R}\nsim R_{-}}\bar{\Phi}_{\sigma_{L^{+}%
}=\epsilon^{+},\sigma_{L^{-}}=\epsilon^{-}}^{h,\xi}\left(  \mathcal{R}\right)
\right\} \\
&  \leq\exp\left\{  2\sum_{x\in\Lambda_{+}}e^{-\frac{c}{2}\inf_{\mathcal{R}%
\in\mathcal{C}_{n}\ :\ \mathcal{R}\nsim R_{-},V\left(  \mathcal{R}\right)
\cap\left(  \Delta_{n}\bigtriangleup\Delta_{m}\right)  \neq\varnothing
,V\left(  \mathcal{R}\right)  \ni x}\sum_{R\in\mathcal{R}}\left\Vert
R\right\Vert }\times\right. \nonumber\\
&  \left.  \times\sum_{\mathcal{R}\in\mathcal{C}_{n}\ :\ \mathcal{R}\nsim
R_{-},V\left(  \mathcal{R}\right)  \cap\left(  \Delta_{n}\bigtriangleup
\Delta_{m}\right)  \neq\varnothing,V\left(  \mathcal{R}\right)  \ni x}%
\bar{\Phi}_{\sigma_{L^{+}}=\epsilon^{+},\sigma_{L^{-}}=\epsilon^{-}}^{h,\xi
}\left(  \mathcal{R}\right)  e^{\frac{c}{2}\sum_{R\in\mathcal{R}}\left\Vert
R\right\Vert }\right\} \nonumber\\
&  \leq\exp\left\{  c\left\Vert R_{-}\right\Vert e^{-c\left(  \frac{\left\Vert
R_{-}\right\Vert }{2}+m\right)  }\sum_{k=0}^{L}e^{-\frac{c}{2}k\wedge\left(
L-k\right)  }\right\}  \leq e^{\frac{4}{1-e^{-\frac{c}{2}}}e^{-cm}%
}\ .\nonumber
\end{align}
The lower bound in (\ref{L2}) follows from the estimate
\begin{equation}
\frac{\phi_{n,\beta}\left\{  \sigma_{L^{+}}=\epsilon^{+},\sigma_{L^{-}%
}=\epsilon^{-}\right\}  }{\phi_{m,\beta}\left\{  \sigma_{L^{+}}=\epsilon
^{+},\sigma_{L^{-}}=\epsilon^{-}\right\}  }\geq\exp\left\{  -2\sum
_{\mathcal{R}\in\mathcal{C}_{n}\ :\ V\left(  \mathcal{R}\right)  \cap\left(
\Delta_{n}\bigtriangleup\Delta_{m}\right)  \neq\varnothing,\mathcal{R}\nsim
R_{+},\mathcal{R}\nsim R_{-}}\bar{\Phi}_{\sigma_{L^{+}}=\epsilon^{+}%
,\sigma_{L^{-}}=\epsilon^{-}}^{h,\xi}\left(  \mathcal{R}\right)  \right\}  \ .
\end{equation}

\end{proof}

\begin{lemma}
There exists a positive value of the external magnetic field $h^{\ast}$ such
that, for any $h>h^{\ast}$ and any $L,m\in\mathbb{N},$ uniformly in
$\beta>0,\epsilon^{+},\epsilon^{-}\in\Omega_{L}$ and in the b.c.'s $\xi
\in\Omega_{\overline{\partial}\Delta_{n}}$ and $\eta_{n}\in\mathcal{S}%
_{\Delta_{n}^{c}\backslash\overline{\partial}\Delta_{n}},$%
\begin{equation}
\left\vert \frac{\phi_{n,\beta}\left\{  \sigma_{L^{+}}=\sigma_{L^{-}}\right\}
}{\phi_{m,\beta}\left\{  \sigma_{L^{+}}=\sigma_{L^{-}}\right\}  }-\frac
{\phi_{n,\beta}\left\{  \sigma_{L^{+}}=\epsilon^{+},\sigma_{L^{-}}%
=\epsilon^{-}\right\}  }{\phi_{m,\beta}\left\{  \sigma_{L^{+}}=\epsilon
^{+},\sigma_{L^{-}}=\epsilon^{-}\right\}  }\right\vert \leq e^{\frac
{\psi\left(  c\right)  }{2}e^{-cm}}\frac{\psi\left(  c\right)  }{2}e^{-cm}\ .
\label{L3}%
\end{equation}

\end{lemma}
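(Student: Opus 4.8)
The plan is to express both ratios in (\ref{L3}) through the cluster expansion exactly as in the proof of the previous lemma, and then to compare the two resulting exponentials by a mean value argument rather than by a crude triangle inequality. First I would record, for every fixed pair $(\alpha^{+},\alpha^{-})\in\Omega_{L}\times\Omega_{L}$, the identity obtained there,
\begin{equation}
\frac{\phi_{n,\beta}\{\sigma_{L^{+}}=\alpha^{+},\sigma_{L^{-}}=\alpha^{-}\}}{\phi_{m,\beta}\{\sigma_{L^{+}}=\alpha^{+},\sigma_{L^{-}}=\alpha^{-}\}}=e^{S(\alpha^{+},\alpha^{-})}\ ,
\end{equation}
where $S(\alpha^{+},\alpha^{-})$ is the sum of $\hat{\Phi}_{\sigma_{L^{+}}=\alpha^{+},\sigma_{L^{-}}=\alpha^{-}}^{h,\xi}(\mathcal{R})-\hat{\Phi}^{h,\xi}(\mathcal{R})$ over the clusters $\mathcal{R}\in\mathcal{C}_{n}$ meeting $\Delta_{n}\bigtriangleup\Delta_{m}$ and incompatible with $R_{+}\cup R_{-}$ that already appeared in the previous lemma. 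The estimate proved there gives, writing $t:=\frac{\psi(c)}{2}e^{-cm}$, the uniform bound $|S(\alpha^{+},\alpha^{-})|\leq t$ in $\beta$, in the inner b.c. $\xi$ and in the outer b.c. $\eta_{n}$, since these enter only through the polymer activities; note that the right-hand side of (\ref{L3}) is exactly $t\,e^{t}$.

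Next I would use that the diagonal event decomposes as the disjoint union $\{\sigma_{L^{+}}=\sigma_{L^{-}}\}=\bigcup_{\epsilon\in\Omega_{L}}\{\sigma_{L^{+}}=\epsilon,\sigma_{L^{-}}=\epsilon\}$. Setting
\begin{equation}
w_{\epsilon}:=\frac{\phi_{m,\beta}\{\sigma_{L^{+}}=\epsilon,\sigma_{L^{-}}=\epsilon\}}{\phi_{m,\beta}\{\sigma_{L^{+}}=\sigma_{L^{-}}\}}\geq0\ ,\qquad\sum_{\epsilon\in\Omega_{L}}w_{\epsilon}=1\ ,
\end{equation}
and applying the identity above to the diagonal configuration $(\epsilon,\epsilon)$, the diagonal ratio becomes the convex combination $\sum_{\epsilon}w_{\epsilon}e^{S(\epsilon,\epsilon)}$. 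Since $e^{S(\epsilon^{+},\epsilon^{-})}$ does not depend on $\epsilon$ and the weights sum to one, the left-hand side of (\ref{L3}) equals $|\sum_{\epsilon}w_{\epsilon}(e^{S(\epsilon,\epsilon)}-e^{S(\epsilon^{+},\epsilon^{-})})|$, which is at most $\max_{\epsilon\in\Omega_{L}}|e^{S(\epsilon,\epsilon)}-e^{S(\epsilon^{+},\epsilon^{-})}|$.

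To close the estimate I would apply the mean value theorem to $x\mapsto e^{x}$, so that for each $\epsilon$ the difference equals $e^{\zeta}|S(\epsilon,\epsilon)-S(\epsilon^{+},\epsilon^{-})|$ for some $\zeta$ between $S(\epsilon,\epsilon)$ and $S(\epsilon^{+},\epsilon^{-})$; both lie in $[-t,t]$ by the previous lemma, whence $e^{\zeta}\leq e^{t}$. For the difference of exponents the free terms $\hat{\Phi}^{h,\xi}(\mathcal{R})$ cancel, leaving $\sum_{\mathcal{R}}(\hat{\Phi}_{\sigma_{L^{+}}=\epsilon,\sigma_{L^{-}}=\epsilon}^{h,\xi}(\mathcal{R})-\hat{\Phi}_{\sigma_{L^{+}}=\epsilon^{+},\sigma_{L^{-}}=\epsilon^{-}}^{h,\xi}(\mathcal{R}))$ over the same clusters; bounding each summand by $2\bar{\Phi}^{h,\xi}(\mathcal{R})$ and invoking the very sum controlled in the previous lemma yields $|S(\epsilon,\epsilon)-S(\epsilon^{+},\epsilon^{-})|\leq t$. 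Multiplying the two factors gives the bound $t\,e^{t}$ claimed in (\ref{L3}).

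The step I expect to be the genuine obstacle is this last one: one must not estimate the two exponentials separately, since controlling $|e^{S(\epsilon,\epsilon)}-1|$ and $|e^{S(\epsilon^{+},\epsilon^{-})}-1|$ individually by $t\,e^{t}$ would only produce $2\,t\,e^{t}$ and lose the stated constant. The sharp bound requires keeping the two cluster sums together, so that their \emph{difference} is governed by the same geometric series $\sum_{k=0}^{L}e^{-\frac{c}{2}k\wedge(L-k)}$ and the factor $e^{-cm}$ as in the previous lemma, and then feeding $|S(\epsilon,\epsilon)-S(\epsilon^{+},\epsilon^{-})|\leq t$ into the mean value identity together with $e^{\zeta}\leq e^{t}$.
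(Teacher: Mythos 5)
Your argument is correct and its core is the same as the paper's: represent the ratios $\phi_{n,\beta}/\phi_{m,\beta}$ as exponentials of cluster sums restricted to clusters meeting $\Delta_{n}\bigtriangleup\Delta_{m}$ and incompatible with $R_{+}$ and $R_{-}$, apply the mean-value bound $\left\vert e^{a}-e^{b}\right\vert\leq e^{\max\left(\left\vert a\right\vert,\left\vert b\right\vert\right)}\left\vert a-b\right\vert$ with $\max\left(\left\vert a\right\vert,\left\vert b\right\vert\right)\leq\frac{\psi\left(c\right)}{2}e^{-cm}$ supplied by the previous lemma, and then control $\left\vert a-b\right\vert$ by $\sum_{\mathcal{R}}\left(\left\vert\hat{\Phi}_{1}\right\vert+\left\vert\hat{\Phi}_{2}\right\vert\right)\leq 2\sum_{\mathcal{R}}\bar{\Phi}\leq\frac{\psi\left(c\right)}{2}e^{-cm}$, exactly as in the paper. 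The one genuine difference is your treatment of the diagonal event: the paper writes $\phi_{n,\beta}\left\{\sigma_{L^{+}}=\sigma_{L^{-}}\right\}/\phi_{m,\beta}\left\{\sigma_{L^{+}}=\sigma_{L^{-}}\right\}$ directly as a single exponential involving the summed activity $\hat{\Phi}_{\sigma_{L^{+}}=\sigma_{L^{-}}}^{h,\xi}$, implicitly asserting that the $\epsilon$-summed constraint defines a polymer activity whose cluster coefficients obey the same bound (\ref{ext1}); you instead decompose the diagonal ratio as the convex combination $\sum_{\epsilon}w_{\epsilon}e^{S\left(\epsilon,\epsilon\right)}$ and reduce to the worst fixed configuration. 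This costs nothing (the bound $\max_{\epsilon}\left\vert e^{S\left(\epsilon,\epsilon\right)}-e^{S\left(\epsilon^{+},\epsilon^{-}\right)}\right\vert$ is estimated by the same quantities) and buys a cleaner justification, since every object you manipulate is a fixed-configuration activity already covered by the two preceding lemmas; your closing remark about not splitting the difference into $\left\vert e^{S_{1}}-1\right\vert+\left\vert e^{S_{2}}-1\right\vert$, which would lose a factor of $2$, is also the right diagnosis of where the stated constant comes from.
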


\begin{proof}
Proceeding as in the proof of the previous result, as well as in the proof of
the statement (iii) in the thesis of the Proposition in \cite{KP},
\begin{align}
&  \left\vert \frac{\phi_{n,\beta}\left\{  \sigma_{L^{+}}=\sigma_{L^{-}%
}\right\}  }{\phi_{m,\beta}\left\{  \sigma_{L^{+}}=\sigma_{L^{-}}\right\}
}-\frac{\phi_{n,\beta}\left\{  \sigma_{L^{+}}=\epsilon^{+},\sigma_{L^{-}%
}=\epsilon^{-}\right\}  }{\phi_{m,\beta}\left\{  \sigma_{L^{+}}=\epsilon
^{+},\sigma_{L^{-}}=\epsilon^{-}\right\}  }\right\vert \nonumber\\
&  =\left\vert \exp\sum_{\mathcal{R}\in\mathcal{C}_{n}\ :\ V\left(
\mathcal{R}\right)  \cap\left(  \Delta_{n}\bigtriangleup\Delta_{m}\right)
\neq\varnothing,\mathcal{R}\nsim R_{+},\mathcal{R}\nsim R_{-}}\hat{\Phi
}_{\sigma_{L^{+}}=\sigma_{L^{-}}}^{h,\xi}\left(  \mathcal{R}\right)
-\hat{\Phi}^{h,\xi}\left(  \mathcal{R}\right)  \right. \\
&  \left.  -\exp\sum_{\mathcal{R}\in\mathcal{C}_{n}\ :\ V\left(
\mathcal{R}\right)  \cap\left(  \Delta_{n}\bigtriangleup\Delta_{m}\right)
\neq\varnothing,\mathcal{R}\nsim R_{+},\mathcal{R}\nsim R_{-}}\hat{\Phi
}_{\sigma_{L^{+}}=\epsilon^{+},\sigma_{L^{-}}=\epsilon^{-}}^{h,\xi}\left(
\mathcal{R}\right)  -\hat{\Phi}^{h,\xi}\left(  \mathcal{R}\right)  \right\vert
\\
&  \leq e^{\frac{\psi\left(  c\right)  }{2}e^{-cm}}\left\vert \sum
_{\mathcal{R}\in\mathcal{C}_{n}\ :\ V\left(  \mathcal{R}\right)  \cap\left(
\Delta_{n}\bigtriangleup\Delta_{m}\right)  \neq\varnothing,\mathcal{R}\nsim
R_{+},\mathcal{R}\nsim R_{-}}\hat{\Phi}_{\sigma_{L^{+}}=\sigma_{L^{-}}}%
^{h,\xi}\left(  \mathcal{R}\right)  -\hat{\Phi}_{\sigma_{L^{+}}=\epsilon
^{+},\sigma_{L^{-}}=\epsilon^{-}}^{h,\xi}\left(  \mathcal{R}\right)
\right\vert \nonumber\\
&  \leq e^{\frac{\psi\left(  c\right)  }{2}e^{-cm}}\sum_{\mathcal{R}%
\in\mathcal{C}_{n}\ :\ V\left(  \mathcal{R}\right)  \cap\left(  \Delta
_{n}\bigtriangleup\Delta_{m}\right)  \neq\varnothing,\mathcal{R}\nsim
R_{+},\mathcal{R}\nsim R_{-}}\left(  \left\vert \hat{\Phi}_{\sigma_{L^{+}%
}=\sigma_{L^{-}}}^{h,\xi}\left(  \mathcal{R}\right)  \right\vert +\left\vert
\hat{\Phi}_{\sigma_{L^{+}}=\epsilon^{+},\sigma_{L^{-}}=\epsilon^{-}}^{h,\xi
}\left(  \mathcal{R}\right)  \right\vert \right) \nonumber\\
&  \leq e^{\frac{\psi\left(  c\right)  }{2}e^{-cm}}\frac{\psi\left(  c\right)
}{2}e^{-cm}\ .\nonumber
\end{align}
\qquad
\end{proof}

\subsection{Entanglement entropy\label{Entanglement}}

In order to prove Theorem \ref{t1} we follow the same strategy of the proof of
Theorem 2.8 in \cite{GOS} to which we refer the reader for the details of the computations.

As a matter of fact, it follows from the estimate (\ref{mext}) given
below\ that there exist $C:=C\left(  c\right)  ,C^{\prime}:=C^{\prime}\left(
c\right)  >0,$ such that, for any $k\geq K:=\left\lceil C^{-1}\ln C^{\prime
}\right\rceil ,$ the norm of $\rho_{k+1}^{L}-\rho_{k}^{L}$ is bounded by
$C^{\prime}e^{-C\left(  k-K\right)  }.$ Therefore, denoting by $\mathbf{d}$
the smallest value between the dimension of $\mathcal{H}_{L}$ and that of
$\mathcal{H}_{m,L},$ since $S\left(  \rho_{m}^{L}\right)  =-\sum
_{i=1}^{\mathbf{d}}\alpha_{i}\left(  \rho_{m}^{L}\right)  \log\alpha
_{i}\left(  \rho_{m}^{L}\right)  ,$ where $\left\{  \alpha_{i}\left(  \rho
_{m}^{L}\right)  \right\}  _{i=1}^{\mathbf{d}}$ is the vector of the
eigenvalues of $\rho_{m}^{L}$ arranged in decreasing order, if $2\leq m\leq
K,$ we get that $S\left(  \rho_{m}^{L}\right)  $ is smaller than $2K.$ On the
other hand, if $K\geq m,$ iterating the bound of $\max_{i\geq1}\left\vert
\alpha_{i}\left(  \rho_{K+r+1}^{L}\right)  -\alpha_{i}\left(  \rho_{K+r}%
^{L}\right)  \right\vert \leq C^{\prime}e^{-Cr},r\geq0,$ one can prove that
there exist $C_{0}:=C_{0}\left(  c\right)  >0,\iota:=\iota\left(  c\right)
>2$ such that $\alpha_{i}\left(  \rho_{m}^{L}\right)  \leq C_{0}+\alpha
_{i}\left(  \rho_{K}^{L}\right)  ,$ for $i\leq2^{2K},$ and $\alpha_{i}\left(
\rho_{m}^{L}\right)  \leq\frac{C_{0}}{i^{\iota}}$ for $i>2^{K},$ which leads
to the bound $S\left(  \rho_{m}^{L}\right)  \leq C_{1}K,C_{1}:=C_{1}\left(
c\right)  >0.$

\begin{proposition}
There exists a positive value of the external magnetic field $h^{\ast}$ such
that, for any $h>h^{\ast}$ and for any $L,m,n\in\mathbb{N}$ such that $m<n,$%
\begin{equation}
\left\Vert \tilde{\rho}_{m}^{L}-\tilde{\rho}_{n}^{L}\right\Vert \leq
e^{\frac{\psi\left(  c\right)  }{2}\left(  e^{-cm}+6\right)  }\frac
{\psi\left(  c\right)  }{2}e^{-cm}\ . \label{mext}%
\end{equation}

\end{proposition}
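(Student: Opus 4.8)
The plan is to turn the operator-norm statement into the three preceding estimates (\ref{L1}), (\ref{L2}), (\ref{L3}) by means of an exact matrix-element identity, followed by a rank-one domination that keeps the bound uniform in $L$. Throughout I would argue for the finite-$\beta$ objects with fixed b.c., for which (\ref{L1})--(\ref{L3}) hold uniformly in $\beta$, and only pass to the limit $\beta\to\infty$ at the very end; this avoids any appeal to positivity of $\tilde\rho_m^{L,\beta}$. For $\epsilon^+,\epsilon^-\in\Omega_L$ set $N_k:=\phi_{k,\beta}\left\{\sigma_{L^+}=\epsilon^+,\sigma_{L^-}=\epsilon^-\right\}$ and $D_k:=\phi_{k,\beta}\left\{\sigma_{L^+}=\sigma_{L^-}\right\}$, so that $\tilde\rho_k^{L,\beta}(\epsilon^+,\epsilon^-)=N_k/D_k$. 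Writing $r:=N_n/N_m$ and $s:=D_n/D_m$, one has the identity
\[
\tilde\rho_n^{L,\beta}(\epsilon^+,\epsilon^-)-\tilde\rho_m^{L,\beta}(\epsilon^+,\epsilon^-)=\tilde\rho_m^{L,\beta}(\epsilon^+,\epsilon^-)\,\frac{r-s}{s},
\]
which isolates a scalar prefactor $(r-s)/s$ (controlled by (\ref{L2})--(\ref{L3})) multiplying the nonnegative kernel $\tilde\rho_m^{L,\beta}$ (controlled by (\ref{L1})).

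First I would bound the scalar prefactor uniformly in $(\epsilon^+,\epsilon^-)$. The estimate (\ref{L3}) gives $\left|r-s\right|\le e^{\frac{\psi(c)}{2}e^{-cm}}\frac{\psi(c)}{2}e^{-cm}$. For the denominator, note that $D_k=\sum_{\epsilon}N_k(\epsilon,\epsilon)$, so $s$ is a convex combination of the diagonal ratios $N_n(\epsilon,\epsilon)/N_m(\epsilon,\epsilon)$, each of which lies in $[e^{-\frac{\psi(c)}{2}e^{-cm}},e^{\frac{\psi(c)}{2}e^{-cm}}]$ by (\ref{L2}); hence $s^{-1}\le e^{\frac{\psi(c)}{2}e^{-cm}}$ and therefore $\left|(r-s)/s\right|\le e^{\psi(c)e^{-cm}}\frac{\psi(c)}{2}e^{-cm}$.

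The decisive step is to dominate the kernel $\tilde\rho_m^{L,\beta}$ entrywise by a rank-one kernel, since a naive summation over the $2^{\left|\Lambda_0\right|}$ configurations would ruin uniformity in $L$. Putting $p^{\pm}(\epsilon^{\pm}):=\phi_{m,\beta}\left\{\sigma_{L^{\pm}}=\epsilon^{\pm}\right\}$, (\ref{L1}) gives both $N_m(\epsilon^+,\epsilon^-)\le e^{\psi(c)}p^+(\epsilon^+)p^-(\epsilon^-)$ and, applied on the diagonal, $D_m\ge e^{-\psi(c)}\sum_{\epsilon}p^+(\epsilon)p^-(\epsilon)$; the reflection symmetry $t\mapsto -t$ of the slit box identifies the two marginals, $p^+=p^-=:p$, so $D_m\ge e^{-\psi(c)}\left\Vert p\right\Vert_2^2$. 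Combining the three bounds yields $\left|\tilde\rho_n^{L,\beta}-\tilde\rho_m^{L,\beta}\right|(\epsilon^+,\epsilon^-)\le x(\epsilon^+)y(\epsilon^-)$ with $x=C\,p$, $y=C'\,p$ and $CC'=e^{2\psi(c)}\left\Vert p\right\Vert_2^{-2}\,e^{\psi(c)e^{-cm}}\frac{\psi(c)}{2}e^{-cm}$. Since any matrix with $\left|M_{ij}\right|\le x_iy_j$ (entrywise nonnegative $x,y$) obeys $\left\Vert M\right\Vert\le\left\Vert x\right\Vert_2\left\Vert y\right\Vert_2$ — apply Cauchy--Schwarz to $\langle u,Mv\rangle$ after replacing $u,v$ by their entrywise moduli — and $\left\Vert x\right\Vert_2\left\Vert y\right\Vert_2=CC'\left\Vert p\right\Vert_2^2$, the factors $\left\Vert p\right\Vert_2^2$ cancel and the surviving constant is independent of $L$. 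Using $e^{-cm}\le1$, the accumulated $e^{\psi(c)}$-type factors $e^{2\psi(c)+\psi(c)e^{-cm}}$ are at most $e^{3\psi(c)}\le e^{\frac{\psi(c)}{2}(e^{-cm}+6)}$, so that $\left\Vert\tilde\rho_n^{L,\beta}-\tilde\rho_m^{L,\beta}\right\Vert\le e^{\frac{\psi(c)}{2}(e^{-cm}+6)}\frac{\psi(c)}{2}e^{-cm}$; letting $\beta\to\infty$ gives (\ref{mext}).

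The main obstacle is precisely this last passage from the pointwise estimate to the operator norm while preserving $L$-uniformity. Bounding $\left\Vert\tilde\rho_n^{L,\beta}-\tilde\rho_m^{L,\beta}\right\Vert$ by its Hilbert--Schmidt norm, or by the Schur test after factoring out $p^-(\epsilon^-)/D_m$, introduces factors that diverge as $L\to\infty$ (through $2^{\left|\Lambda_0\right|}$ or through $1/\max_{\epsilon}p(\epsilon)$). The role of (\ref{L1}) is to replace $\tilde\rho_m^{L,\beta}$ by a rank-one envelope whose norm $\left\Vert p\right\Vert_2^2$ is exactly cancelled by the matching lower bound on $D_m$; verifying the symmetry $p^+=p^-$ (or, failing that, controlling $\left\Vert p^+\right\Vert_2\left\Vert p^-\right\Vert_2/\sum_{\epsilon}p^+(\epsilon)p^-(\epsilon)$) is the delicate point that makes the cancellation, and hence the uniform bound, go through.
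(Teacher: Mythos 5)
Your proof is correct and follows essentially the same route as the paper's: the same algebraic factorization of $\tilde{\rho}_{n}^{L,\beta}-\tilde{\rho}_{m}^{L,\beta}$ into a scalar controlled by (\ref{L3}) times the kernel $\phi_{m,\beta}\left\{ \sigma_{L^{+}}=\epsilon^{+},\sigma_{L^{-}}=\epsilon^{-}\right\} /\phi_{n,\beta}\left\{ \sigma_{L^{+}}=\sigma_{L^{-}}\right\} $, the same use of (\ref{L1}) together with reflection symmetry to dominate that kernel by a rank-one product of marginals whose $\left\Vert p\right\Vert _{2}^{2}$ cancels, and the same Cauchy--Schwarz step to get the $L$-uniform operator norm. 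The only (harmless) deviation is that you control the ratio of denominators $D_{n}/D_{m}$ directly from (\ref{L2}) via the convex-combination observation, whereas the paper invokes a separate marginal estimate $\phi_{n,\beta}\left\{ \sigma_{L^{+}}=\epsilon\right\} /\phi_{m,\beta}\left\{ \sigma_{L^{+}}=\epsilon\right\} \geq e^{-\psi\left(  c\right)  }$ proved as in (\ref{L2}); your variant is slightly cleaner and yields a marginally better constant before relaxing to the stated bound.
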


\begin{proof}
Proceeding as in the proof of Theorem 2.2 in \cite{GOS}, we are reduced to
estimate the following quantity
\begin{equation}
\sum_{\epsilon^{+},\epsilon^{-}\in\Omega_{L}}b\left(  \epsilon^{+}\right)
b\left(  \epsilon^{-}\right)  \left\vert \tilde{\rho}_{m}^{L,\beta}\left(
\epsilon^{+},\epsilon^{-}\right)  -\tilde{\rho}_{n}^{L,\beta}\left(
\epsilon^{+},\epsilon^{-}\right)  \right\vert \ , \label{b1}%
\end{equation}
for any real-valued positive function $b$ on $\Omega_{L}$ such that
$\sum_{\epsilon\in\Omega_{L}}b^{2}\left(  \epsilon\right)  =1.$ But
\begin{align}
\left\vert \tilde{\rho}_{m}^{L,\beta}\left(  \epsilon^{+},\epsilon^{-}\right)
-\tilde{\rho}_{n}^{L,\beta}\left(  \epsilon^{+},\epsilon^{-}\right)
\right\vert  &  \leq\left\vert \frac{\phi_{m,\beta}\left\{  \sigma_{L^{+}%
}=\epsilon^{+},\sigma_{L^{-}}=\epsilon^{-}\right\}  }{\phi_{m,\beta}\left\{
\sigma_{L^{+}}=\sigma_{L^{-}}\right\}  }-\frac{\phi_{n,\beta}\left\{
\sigma_{L^{+}}=\epsilon^{+},\sigma_{L^{-}}=\epsilon^{-}\right\}  }%
{\phi_{n,\beta}\left\{  \sigma_{L^{+}}=\sigma_{L^{-}}\right\}  }\right\vert \\
&  =\frac{\phi_{m,\beta}\left\{  \sigma_{L^{+}}=\epsilon^{+},\sigma_{L^{-}%
}=\epsilon^{-}\right\}  }{\phi_{n,\beta}\left\{  \sigma_{L^{+}}=\sigma_{L^{-}%
}\right\}  }\times\nonumber\\
&  \times\left\vert \frac{\phi_{n,\beta}\left\{  \sigma_{L^{+}}=\sigma_{L^{-}%
}\right\}  }{\phi_{m,\beta}\left\{  \sigma_{L^{+}}=\sigma_{L^{-}}\right\}
}-\frac{\phi_{n,\beta}\left\{  \sigma_{L^{+}}=\epsilon^{+},\sigma_{L^{-}%
}=\epsilon^{-}\right\}  }{\phi_{m,\beta}\left\{  \sigma_{L^{+}}=\epsilon
^{+},\sigma_{L^{-}}=\epsilon^{-}\right\}  }\right\vert \ .\nonumber
\end{align}
Moreover, by (\ref{L1}),
\begin{align}
\phi_{m,\beta}\left\{  \sigma_{L^{+}}=\epsilon^{+},\sigma_{L^{-}}=\epsilon
^{-}\right\}   &  =\frac{\phi_{m,\beta}\left\{  \sigma_{L^{+}}=\epsilon
^{+},\sigma_{L^{-}}=\epsilon^{-}\right\}  }{\phi_{m,\beta}\left\{
\sigma_{L^{+}}=\epsilon^{+}\right\}  \phi_{m,\beta}\left\{  \sigma_{L^{-}%
}=\epsilon^{-}\right\}  }\phi_{m,\beta}\left\{  \sigma_{L^{+}}=\epsilon
^{+}\right\}  \phi_{m,\beta}\left\{  \sigma_{L^{-}}=\epsilon^{-}\right\} \\
&  \leq e^{\psi\left(  c\right)  }\phi_{m,\beta}\left\{  \sigma_{L^{+}%
}=\epsilon^{+}\right\}  \phi_{m,\beta}\left\{  \sigma_{L^{-}}=\epsilon
^{-}\right\} \nonumber
\end{align}
and, by the symmetry under the reflection w.r.t. the horizontal axis,
\begin{align}
\phi_{n,\beta}\left\{  \sigma_{L^{+}}=\sigma_{L^{-}}\right\}   &
=\sum_{\epsilon\in\Omega_{L}}\frac{\phi_{n,\beta}\left\{  \sigma_{L^{+}%
}=\sigma_{L^{-}}=\epsilon\right\}  }{\phi_{n,\beta}\left\{  \sigma_{L^{+}%
}=\epsilon\right\}  \phi_{n,\beta}\left\{  \sigma_{L^{-}}=\epsilon\right\}
}\phi_{n,\beta}\left\{  \sigma_{L^{+}}=\epsilon\right\}  \phi_{n,\beta
}\left\{  \sigma_{L^{-}}=\epsilon\right\} \\
&  \geq e^{-\psi\left(  c\right)  }\sum_{\epsilon\in\Omega_{L}}\phi_{n,\beta
}^{2}\left\{  \sigma_{L^{+}}=\epsilon\right\}  \ .\nonumber
\end{align}
Therefore, by (\ref{L3}), (\ref{b1}) is bounded by
\begin{align}
&  e^{\frac{\psi\left(  c\right)  }{2}\left(  e^{-cm}+4\right)  }\frac
{\psi\left(  c\right)  }{2}e^{-cm}\sum_{\epsilon^{+},\epsilon^{-}\in\Omega
_{L}}b\left(  \epsilon^{+}\right)  b\left(  \epsilon^{-}\right)  \frac
{\phi_{m,\beta}\left\{  \sigma_{L^{+}}=\epsilon^{+}\right\}  \phi_{m,\beta
}\left\{  \sigma_{L^{-}}=\epsilon^{-}\right\}  }{\sqrt{\sum_{\epsilon^{+}%
\in\Omega_{L}}\phi_{m,\beta}^{2}\left\{  \sigma_{L^{+}}=\epsilon^{+}\right\}
}\sqrt{\sum_{\epsilon^{-}\in\Omega_{L}}\phi_{m,\beta}^{2}\left\{
\sigma_{L^{-}}=\epsilon^{-}\right\}  }}\times\\
&  \times\frac{\sum_{\epsilon\in\Omega_{L}}\phi_{m,\beta}^{2}\left\{
\sigma_{L^{+}}=\epsilon\right\}  }{\sum_{\epsilon\in\Omega_{L}}\phi_{n,\beta
}^{2}\left\{  \sigma_{L^{+}}=\epsilon\right\}  }\ .\nonumber
\end{align}
Proceeding as in the proof of (\ref{L2}), for any $\epsilon\in\Omega_{L},$ we
get the bound
\begin{equation}
\frac{\phi_{n,\beta}\left\{  \sigma_{L^{+}}=\epsilon\right\}  }{\phi_{m,\beta
}\left\{  \sigma_{L^{+}}=\epsilon\right\}  }\geq e^{-\psi\left(  c\right)
}\ .
\end{equation}
Hence, (\ref{b1}) is smaller than
\begin{equation}
e^{\frac{\psi\left(  c\right)  }{2}\left(  e^{-cm}+6\right)  }\frac
{\psi\left(  c\right)  }{2}e^{-cm}\left\vert \sum_{\epsilon\in\Omega_{L}%
}b\left(  \epsilon\right)  \frac{\phi_{m,\beta}\left\{  \sigma_{L^{+}%
}=\epsilon\right\}  }{\sqrt{\sum_{\epsilon^{\prime}\in\Omega_{L}}\phi
_{m,\beta}^{2}\left\{  \sigma_{L^{+}}=\epsilon^{\prime}\right\}  }}\right\vert
^{2}%
\end{equation}
and by the Schwarz inequality we get (\ref{mext}).
\end{proof}

\end{document}